\documentclass[11pt]{amsart}
\usepackage[noadjust]{cite}

\usepackage{amssymb}
\usepackage{amsthm}
\usepackage{hyperref}
\usepackage[arrow, matrix, curve]{xy}
\usepackage{color}
\usepackage{enumitem}

\newcommand{\g}{{\mathfrak g}}

\newcommand{\fe}{{\mathfrak e}}

\newcommand{\fg}{{\mathfrak g}}
\newcommand{\fh}{{\mathfrak h}}

\newcommand{\fk}{{\mathfrak k}}

\newcommand{\fm}{{\mathfrak m}}

\newcommand{\fq}{{\mathfrak q}}
\newcommand{\fp}{{\mathfrak p}}

\renewcommand\sp{\mathfrak {sp}}

\renewcommand{\:}{\colon}
\renewcommand{\1}{\mathbf{1}}

\newcommand{\cD}{\mathcal{D}}
\newcommand{\cE}{\mathcal{E}}
\newcommand{\cF}{\mathcal{F}}

\newcommand{\cH}{\mathcal{H}}

\newcommand{\cM}{\mathcal{M}}
\newcommand{\cN}{\mathcal{N}}
\newcommand{\cO}{\mathcal{O}}

\newcommand{\cR}{\mathcal{R}}
\newcommand{\cS}{\mathcal{S}}
\newcommand{\cT}{\mathcal{T}}
\newcommand{\cU}{\mathcal{U}}

\newcommand{\cW}{\mathcal{W}}

\newcommand\bx{{\bf{x}}}

\newcommand\bz{{\bf{z}}}

\newcommand{\bD}{\mathbb{D}}

\newcommand{\mapright}[1]{\smash{\mathop{\arr}\limits^{#1}}}

\newcommand{\ssssarr}{\hbox to 15pt{\rightarrowfill}}
\newcommand{\sssarr}{\hbox to 20pt{\rightarrowfill}}
\newcommand{\ssarr}{\hbox to 30pt{\rightarrowfill}}
\newcommand{\sarr}{\hbox to 40pt{\rightarrowfill}}
\newcommand{\arr}{\hbox to 60pt{\rightarrowfill}}
\newcommand{\larr}{\hbox to 60pt{\leftarrowfill}}
\newcommand{\Arr}{\hbox to 80pt{\rightarrowfill}}

\newcommand{\eset}{\emptyset}

\newcommand{\dd}{{\tt d}}

\newcommand{\subeq}{\subseteq}
\newcommand{\supeq}{\supseteq}

\newcommand{\into}{\hookrightarrow}
\newcommand{\eps}{\varepsilon}

\newcommand{\R}{{\mathbb R}}
\newcommand{\C}{{\mathbb C}}

\newcommand{\T}{{\mathbb T}}

\newcommand{\bH}{{\mathbb H}}
\newcommand{\bS}{{\mathbb S}}


\renewcommand{\hat}{\widehat}

\renewcommand{\tilde}{\widetilde}


\newcommand{\Aff}{\mathop{{\rm Aff}}\nolimits}

\newcommand{\SL}{\mathop{{\rm SL}}\nolimits}

\newcommand{\AU}{\mathop{{\rm AU}}\nolimits}
\newcommand{\PSL}{\mathop{{\rm PSL}}\nolimits}
\newcommand{\SO}{\mathop{{\rm SO}}\nolimits}
\newcommand{\SU}{\mathop{{\rm SU}}\nolimits}

\newcommand{\U}{\mathop{\rm U{}}\nolimits}


\newcommand{\fsl} {\mathop{{\mathfrak{sl} }}\nolimits}

\newcommand{\su}  {\mathop{{\mathfrak{su} }}\nolimits}
\newcommand{\so}  {\mathop{{\mathfrak{so} }}\nolimits}

\newcommand{\Exp}{\mathop{{\rm Exp}}\nolimits}
\newcommand{\Fix}{\mathop{{\rm Fix}}\nolimits}

\newcommand{\ad}{\mathop{{\rm ad}}\nolimits}
\newcommand{\Ad}{\mathop{{\rm Ad}}\nolimits}

\renewcommand{\Im}{\mathop{{\rm Im}}\nolimits}

\newcommand{\Aut}{\mathop{{\rm Aut}}\nolimits}

\newcommand{\diag}{\mathop{{\rm diag}}\nolimits}
\newcommand{\End}{\mathop{{\rm End}}\nolimits}
\newcommand{\id}{\mathop{{\rm id}}\nolimits}

\newcommand{\Inn}{\mathop{{\rm Inn}}\nolimits}

\newcommand{\dS}{\mathop{{\rm dS}}\nolimits}

\newcommand{\nin}{\noindent} 
\newcommand{\oline}{\overline}

\newcommand{\la}{\langle}
\newcommand{\ra}{\rangle}

\newcommand{\res}{\vert}

\newcommand{\spann}{{\rm span}}

\theoremstyle{plain}
\newtheorem{theorem}{Theorem}[section]
\newtheorem{proposition}[theorem]{Proposition}

\theoremstyle{definition}

\newtheorem{example}[theorem]{Example}


\newcommand{\pmat}[1]{\begin{pmatrix} #1 \end{pmatrix}}

\renewcommand{\phi}{\varphi}

\usepackage{color}

\newcommand\be{{\bf{e}}}

\newcommand\bw{{\bf{w}}}

\newcommand{\sH}{{\sf H}}

\newcommand{\sV}{{\tt V}}
\newcommand{\sE}{{\tt E}}
\newcommand{\sF}{{\tt F}}

\newcommand{\PSU}{\mathop{{\rm PSU}}\nolimits}

\renewcommand{\phi}{\varphi} 

\newcommand{\AdS}{\mathop{{\rm AdS}}\nolimits}

\newcommand{\rU}{\mathrm{U}}
\newcommand{\rO}{\mathrm{O}}
\newcommand{\D}{\mathbb{D}}
\begin{document}

\title{Algebraic Quantum Field Theory and Causal Symmetric Spaces} 

\author{Karl-Hermann Neeb}
\address{Department Mathematik,
Friedrich-Alexander-Universit\"at,
Erlangen-N\"urnberg,\break 
Cauer\-strasse 11,
91058 Erlangen, Germany}
\email{neeb@math.fau.de}

\author{Gestur \'Olafsson} 
\address{Department of Mathematics, Louisiana State University, Baton Rouge, LA 70803, USA}
\email{olafsson@math.lsu.edu}

\subjclass{Primary: 22E45,\\ 81R05; Secondary 81T05}

\keywords{Euler element, causal symmetric space, 
  standard subspace, algebras of local observables,
  Quantum Field Theory}

\thanks{
The research of K.-H. Neeb was partially supported by DFG-grant NE 413/10-1. The research of G. \'Olafsson was partially
supported by Simons grant 586106.
} 
\maketitle

\begin{abstract}
  In this article we review our recent work on the
  causal structure of symmetric spaces and
  related geometric aspects of Algebraic Quantum Field Theory.
  Motivated by some general results on modular groups related to nets
  of von Neumann algebras,
  we focus on Euler elements of the Lie algebra,
  i.e., elements whose adjoint action defines a
  $3$-grading. We  study the wedge regions they determine
  in corresponding causal symmetric spaces and
  describe some methods to construct nets of von Neumann algebras
  on causal symmetric spaces that satisfy abstract versions of the
  Reeh--Schlieder and the Bisognano-Wichmann condition. 
\end{abstract}

\maketitle
\tableofcontents

\section{Introduction} 
\label{sec:1}
 
Recent interest in causal symmetric spaces in relation with 
representation theory arose from their role as analogs of
spacetime manifolds in the context of
Algebraic Quantum Field Theory (AQFT) 
in the sense of Haag--Kastler, where one considers 
{\it nets} of von Neumann algebras $\cM(\cO)$
on a fixed Hilbert space $\cH$,
associated to open subsets $\cO$ in some space-time manifold~$M$ 
(\cite{Ha96}). The hermitian elements of the algebra $\cM(\cO)$ represent
observables  that can be measured in the ``laboratory'' $\cO$.
In our context $M$ need not be a time-oriented
Lorentzian manifold. We only assume the existence of a
field $(C_m)_{m \in M}$ of pointed generating closed convex cones
$C_m \subeq T_m(M)$ that is invariant under a smooth action of a
connected Lie group~$G$ with Lie algebra~$\g$.

On de Sitter space $\dS^d$ and Anti-de Sitter space $\AdS^d$ 
the causal structure is given by a Lorentzian metric,
but in general semisimple 
causal symmetric spaces are pseudo-Riemann but not Lorentzian
(cf.\ Section~\ref{sec:2}).
This allows us to study causality aspects of AQFT
in a highly symmetric context without the need of an invariant Lorentzian form.

One typically requires the following properties:
\begin{itemize}
\item[\rm(I)] Isotony: 
  $\cO_1 \subeq \cO_2$ implies $\cM(\cO_1) \subeq \cM(\cO_2)$. 
\item[\rm(L)] Locality: 
  $\cO_1 \subeq \cO_2^\prime$ implies $\cM(\cO_1) \subeq \cM(\cO_2)^\prime$,
  where $\cO^\prime$ is the {\it causal complement}
  of $\cO$, i.e., the maximal
  open subset that cannot be connected to $\cO$ by causal curves,
  and for a von Neumann
  algebra $\cN \subset B(\cH)$ its commutant is denoted $\cN^\prime$.
  \item[\rm(RS)] Reeh--Schlieder property: There exists a unit vector
  $\Omega\in \cH$ that is {\it cyclic} for every $\cM(\cO)$ with $\cO \not=\eset$,
  i.e., the subspace $\cM(\cO)\Omega$ is dense in $\cH$.
\item[\rm(Cov)] Covariance: 
  There exists a unitary representation
  $U \: G \to \U(\cH)$ such that 
 \[U(g) \cM(\cO) U(g)^{-1} = \cM(g\cO)\quad\text{for } g \in G.\] 
\item[\rm(BW)] Bisognano--Wichmann property: 
  There exists an open subset $W \subeq M$ (called a wedge region) 
  such that   $\Omega$ is also {\it separating} for $\cM(W)$, i.e.,
  the map $\cM(W) \to \cH, A \mapsto A\Omega$ is injective.
  We further assume that there exists an element 
  $h \in \g$,  for which 
  the modular operator  $\Delta$ associated to the pair
  $(\cM(W),\Omega)$ by the
  Tomita--Takesaki Theorem \cite[Thm.~2.5.14]{BR87} (cf.\ \eqref{eq:modop} below)
  satisfies   $\Delta^{-it/2\pi} = U (\exp th)$ for $t\in \R$. 
\item[\rm(Vac)] Invariance of vacuum: $U (g)\Omega = \Omega$   for every $g \in G$. 
\end{itemize}

In this context a natural question is, to which extent such
nets of von Neumann algebras exist on a causal homogeneous space
  $M = G/H$ of a finite-dimensional Lie group~$G$,
  and in particular on causal symmetric spaces.
  To address this question, it is natural to simplify the structures
  by considering instead of the pair $(\cM,\Omega)$ the
  corresponding real subspace $\sV := \sV_{(\cM,\Omega)}
  := \oline{\cM_h\Omega}$, where $\cM_h =\{  M \in \cM \: M^* =M\}$.
  This subspace is called
  \begin{itemize}
  \item {\it cyclic} if $\sV + i \sV$ is dense in $\cH$,
    which means that $\Omega$ is cyclic for $\cM$.
  \item {\it separating} if $\sV \cap i \sV = \{0\}$, 
    which means that $\Omega$ is separating for $\cM$.
  \item {\it standard} if it is cyclic and separating, i.e.,
if $\sV \cap i \sV = \{0\}$ and $\oline{\sV + i \sV} = \cH$.
\end{itemize}

These three properties of real subspaces make sense without
  any reference to operator algebras, but they still reflect
  an important part of the underlying structures that can be studied
  in the much simpler context of real subspaces.
If $\sV \subeq \cH$ is a standard subspace, 
then $S_\sV : \sV + i \sV \to \cH, x + i y \mapsto x- iy$ 
defines a densely defined  closed operator with $\sV = \Fix(S_\sV)$; 
the \textit{Tomita operator} of~$\sV$. Its polar decomposition 
can be written as
\begin{equation}
  \label{eq:modop}
  S_\sV = J_\sV \Delta_\sV^{1/2},
\end{equation}
where $J_\sV$ is a {\it conjugation} (an anti-unitary 
involution) and $\Delta_\sV$ is a positive selfadjoint operator 
satisfying $J_\sV \Delta_\sV J_\sV = \Delta_\sV^{-1}$. 
For standard subspaces of the form $\sV_{(\cM,\Omega)}$ we thus recover 
the modular objects provided by the Tomita--Takesaki Theorem. In particular, 
the interaction between unitary group representations and nets of 
von Neumann algebras can already be studied 
in the context of nets of real subspaces. 
A {\it net of real subspaces} on $M$ is a family 
$\sV(\cO) \subeq \cH$ of closed real subspaces of a complex 
Hilbert space $\cH$, assigned to open subsets $\cO$ of a causal 
manifold $M$. We consider the following properties: 
\begin{itemize}
\item[\rm(I)] Isotony: 
$\cO_1 \subeq \cO_2$ implies $\sV(\cO_1) \subeq \sV(\cO_2)$.
\item[\rm(L)] Locality:  $\cO_1 \subeq \cO_2'$ implies
  $\sV(\cO_1) \subeq \sV(\cO_2)' := \sV(\cO_2)^{\bot_\omega}$, where 
$\omega = \Im\la\cdot,\cdot\ra$ is the canonical symplectic form on $\cH$. 
\item[\rm(RS)] Reeh--Schlieder property: 
If $\cO$ is non-empty, then $\sV(\cO)$ is cyclic. 
\item[\rm(Cov)] Covariance:
There exists a unitary representation
  $U \: G \to \U(\cH)$ such that 
$U(g) \sV(\cO)  = \sV(g\cO)$ for $g \in G$. 
\item[\rm(BW)] Bisognano--Wichmann property:
  There exists an open subset $W \subeq M$ (a ``wedge region'')
  such that $\sV(W)$ is standard and an element $h \in \g$ such that 
  $\Delta_{\sV(W)}^{-it/2\pi} = U(\exp t h)$ holds for all $t \in \R$.
\end{itemize}

Nets of standard subspaces serve as building blocks 
  for nets of von Neumann algebras.
Applying second quantization functors (such as bosonic or fermionic
second quantization; see \cite{Si74})
 to associate to each real subspace $\sV \subeq \cH$   
 a pair $(\cR(\sV),\Omega)$, where $\cR(\sV)\subeq B(\cF(\cH))$
 is a von Neumann algebra 
 on a suitable Fock space $\cF(\cH)$, and if $\sV$ is cyclic/separating, then 
 the vacuum vector $\Omega$ is cyclic/separating for $\cR(\sV)$. 
This method has been developed by Araki and Woods  in the context of 
free bosonic quantum fields (\cite{Ar63, Ar64, AW63}); 
some of the corresponding fermionic results are more recent 
(cf.\ \cite{EO73}, \cite{BJL02}). Other statistics (anyons) 
are developed in \cite{Schr97} and more recent deformations
are discussed in \cite[\S 3]{Le15}. 
Thus any net of real subspaces defines a
free quantum field in the sense of Haag--Kastler on the
corresponding Fock space. The properties listed above for 
the net $\cR(\sV(\cO))$ follow 
from the corresponding ones for the net $\sV(\cO)$ and
$\Omega$ is the canonical vacuum vector.

Conversely, any net of von Neumann algebras $\cM(\cO)$ satisfying
  the listed properties immediately leads to the net
  $\sV_{(\cM(\cO), \Omega)}$ of standard subspaces with the corresponding properties.  
  Having these two passages in mind, we therefore content ourselves in the following with the
   discussion of  nets of standard subspaces.

The current interest in standard subspaces arose in the 1990's from the work of Borchers and  Wiesbrock \cite{Bo92, Wi93}.
This led to the concept of modular localization in AQFT 
introduced by Brunetti, Guido and Longo in \cite{BGL02, BGL93};
see also \cite{BDFS00} and \cite{Le15, LL15}
for important applications of this technique.

We know from \cite{MN22b} 
that, if $\ker(U)$ is discrete,
then the elements $h \in \g$ generating the modular group of the 
standard subspace $\sV(W)$ in (BW) is an {\it Euler element}, i.e.,
$\ad h$ defines a $3$-grading
\begin{equation}
  \label{eq:euler1}
 \g = \g_1(h) \oplus \g_0(h) \oplus \g_{-1}(h), 
\quad \mbox{ where } \quad \g_\lambda(h)
= \ker(\ad h - \lambda \1).
\end{equation}
Moreover, the modular conjugation $J = J_{\sV(W)}$ of $\sV(W)$ satisfies
\begin{equation}
  \label{eq:jrel}
  J U(g) J = U(\tau_h(g)) \quad \mbox{ for } \quad
  \tau_h(\exp x) = \exp(e^{\pi i \ad h}x), x \in \g.
\end{equation}
It follows in particular that the unitary representation $U \:  G \to \U(\cH)$
extends by $U(\tau_h) := J$ to an {\it anti-unitary representation} of
the extended group
\begin{equation}
  \label{eq:gext}
  G_{\tau_h} := G \rtimes \{\id_G, \tau_h\}
\end{equation}
 in the sense of \cite{MN21, NO17} (cf.\ Subsection~\ref{sec:4.2}). 

At his point, we are thus facing the following questions:
\begin{itemize}
\item[\rm(Q1)] What are the natural causal spaces $M$ on which such structures
  exist and how does the existence 
of Euler elements relate to the geometry of these spaces? 

\item[\rm(Q2)] Given $M$ and an Euler element $h \in \g$, what are the natural
  wedge regions $W \subeq M$ for which (BW) is satisfied? 
\item[\rm(Q3)] Given an anti-unitary
  representation $(U, \cH)$ of $G_{\tau_h}$, how can we construct
  corresponding nets of standard subspaces?
\item[\rm(Q4)] Which unitary representations $(U,\cH)$ of $G$ occur?
\end{itemize}

For the simplicity of exposition, we restrict in the following to the
special case, where $\g$ is a {\bf simple} real Lie algebra. Many results
are true with suitable modifications in a more general context
(cf.\ \cite{NO21, Oeh22}). 
The existence of an Euler element $h\in \g$ leads to a natural
family of causal symmetric spaces of the form $G/H$, where
$G$ is a connected Lie group with Lie algebra $\g$
(see Section~\ref{sec:2} and \cite{MNO22a} for details).
If, in addition, $\g$ is simple hermitian, i.e., isomorphic to 
\[ \su_{p,q}(\C),   \quad
  \so_{2,d}(\R), \quad
  \sp_{2n}(\R), \quad
  \so^*(2n),\ \quad \fe_{6(-14)} \quad \mbox{ or }  \quad
  \fe_{7(-25)} \]
(see \cite{He78} for the notation),
the Euler element specifies a causal structure on its associated
minimal flag manifold $G/P^-(h)$ (cf.\ Section~\ref{sec:6}) 
whose simply connected covering is a simple space-time manifolds in the sense of 
Mack--de Riese \cite{MdR07}.
We think of this observation as a suitable answer to~(Q1).

On some of these spaces the causal curves define a global order structure
with compact order intervals (they are called {\it globally hyperbolic})
and in this context one can also prove the existence of a global
``time function''  with group theoretic methods (see \cite{Ne91}).
In fact,
if $(G,\tau, H,C)$ is a non-compactly causal symmetric Lie group 
(cf.~Section~\ref{sec:2}), 
then $S=(\exp C)H$ is a closed
subsemigroup, such that the polar map $C \times H \to S$ is a homeo\-morphism
and the causal ordering on $M = G/H$ is given by $aH\ge bH$ if $b^{-1}a\in S$.
In particular the set $\{x\in  G/H\: eH \le x\}$ is homeomorphic to the cone~$C$.
On the other hand, if the center $Z(G)$ is finite, then 
the compactly causal symmetric spaces $G/H$ 
have closed causal curves. Hence there is no global causal ordering
and in particular $\cO'$ cannot be defined in terms of this order;
typically $\cO' = \eset$. 
We refer to the monograph \cite{HO97}
for more details and a complete exposition
of the classification of irreducible symmetric spaces, 
which is also explained in \cite{MNO22a} from the perspective of Euler elements.

On a superficial level, the answer to (Q2) is also rather simple.
Given a causal symmetric space $M = G/H$ with cone field $(C_m)_{m \in M}$ and
the Euler element $h \in \g$, we consider on $M$ the {\it modular flow}
$\alpha_t(gH) = \exp(th)gH$ and write $X_M^h$ for its infinitesimal
generator, a smooth vector field on $M$. Then its {\it positivity domain} 
\begin{equation}
  \label{eq:wmplus}
  W := W_M^+(h) := \{ m \in M \: X_M^h(m) \in C_m^\circ \}
\end{equation}
turns out to be the most natural candidate for a wedge region in~$M$.
However, the geometry of these domains is not so easy to understand, and it
took us some effort to understand these domains
(see Section~\ref{sec:3}, \cite{NO22a, NO22b, MNO22b}).
On the wedge region $W$, the (BW) property has the physical
interpretation that the action of the modular group 
on the algebra $\cM(W)$ is considered to encode the dynamics,
the flow of time, for the quantum theory on $W$  (\cite[\S 6.4]{Su05}).
This connects naturally 
with the approach of Connes and Rovelli who construct the dynamics 
of a quantum statistical system as a modular one-parameter group $\Delta^{it}$ 
(\cite{CR94}). As the flow of time should ``point into the future'',
the positivity domain $W_M^+(h)$ is a natural candidate of a domain for
which (BW) could be satisfied.

Covariant families of real subspaces of $\cH$,  
which are not necessarily standard are easy to construct 
in any anti-unitary representation $(U,\cH)$ of $G_{\tau_h}$ 
using distribution vectors (cf.\ Section~\ref{sec:4}). 
To a finite-dimensional $H$-invariant
real linear subspace  $\sE \subeq \cH^{-\infty}$ and 
an open subset $\cO \subeq M = G/H$,  we associate a 
closed real subspace of $\cH$ as follows. Let $q \: G \to G/H, g \mapsto gH$
be the canonical projection. Then we put 
\begin{equation}
  \label{eq:he1}
  \sH_\sE(\cO)
  := \oline{\spann_\R \{ U^{-\infty}(\phi)\sE \: \phi \in
    C^\infty_c(q^{-1}(\cO),\R)\}} \subeq \cH.
\end{equation}
Here $U^{-\infty}(\phi) \: \cH^{-\infty} \to \cH$ is the linear smoothing 
operator defined by
\[ (U^{-\infty}(\phi)\eta) (\xi )
  = \int_G \phi(g) \eta(U(g^{-1})\xi)\, dg \quad \mbox{ for } \quad
  \xi \in \cH^\infty\]
(cf.\ Section~\ref{sec:4.1}). 
It is easy to see that $\sH_\sE$ defines a net of real subspaces
satisfying (I) and (Cov). So the real problem is to
 find subspaces  $\sE$ such that (RS) and (BW) hold as well
 (cf.\ \cite{NO21}). Typically the elements
$\eta \in \sE$ satisfy a KMS-like condition of the following form:
 The orbit map
 \[ U^\eta \:\R \to \cH^{-\infty}, \quad t \mapsto U^{-\infty}(\exp th)\eta \] 
extends to a map on the closure of the strip
\[ \cS_\pi := \{ z \in\C \: 0 < \Im z < \pi\} \]
which is a holomorphic $\cH$-valued map on the interior,
weak-$*$-continuous on the closed strip and
\begin{equation}
  \label{eq:KMS1}
  U^\eta(t + \pi i) = J U^\eta(t) \quad \mbox{ for } \quad t \in \R.
\end{equation} This implies in particular that
$v := U^\eta\big(\frac{\pi i}{2}\big) \in \cH$ satisfies $J v = v$
and
\begin{equation}
  \label{eq:pi2lim}
  \eta = \lim_{t \to -\frac{\pi}{2}} \Delta^{t/2\pi} v
\end{equation}
in the weak-$*$ topology 
(cf.\ Proposition~\ref{prop:standchar}).

The locality property (L) is even more difficult to implement because
for compactly causal symmetric spaces there may be
closed causal curves, so that
$\cO'$ may be empty and (L) is satisfied for trivial reasons. 
To understand the geometry underlying the locality condition (L)
is an important problem for the future.

We write $\partial U(x)$ for the skew-adjoint infinitesimal generator of the
unitary one-parameter group $U(\exp tx) = e^{t \partial U(x)}$, $ t\in \R$. 
Presently, the best understood situation is the case where the positive cone
\begin{equation}
  \label{eq:cu}
  C_U = \{x\in \fg\: -i \partial U(x) \geq 0 \}
\end{equation}
of the representation $U$ is non-trivial. This relates naturally 
to compactly causal symmetric spaces (cf.\ \cite{NO22a}).
If $\g$ is simple and $U$ is irreducible, \eqref{eq:cu}
specifies unitary highest/lowest weight representations.
We refer to \cite{NO21} for constructions of nets
satisfying (I), (RS), (Cov) and (BW) for the left translation
action of a reductive Lie group $G$ on itself
and also on the flag manifolds $G/P^-(h)$ corresponding to simple space-time
manifolds (cf.\ Section~\ref{sec:6}).
In \cite{Oeh22} one finds generalizations to non-reductive Lie groups.
In \cite{NO22a} one finds a construction for compactly causal symmetric spaces
$M = G/H$ (cf.\ Section~\ref{sec:2}), such as Anti-de Sitter space
$\AdS^d \cong \SO_{2,d-1}(\R)_e/\SO_{1,d-1}(\R)_e$.

Constructions of nets on non-compactly causal spaces are developed in 
\cite{FNO22},
but we are still far from a classification of those $H$-invariant
finite-dimensional subspaces $\sE \subeq \cH^{-\infty}$ for which the net
$\sH_\sE$ satisfies (I), (RS), (Cov) and (BW).
However, we expect that for {\bf every irreducible} unitary representation
$(U,\cH)$ of $G$ such subspaces exist for the corresponding
non-compactly causal symmetric spaces such as 
de Sitter space $\dS^d \cong \SO_{1,d}(\R)_e/\SO_{1,d-1}(\R)_e$.
Here is the central idea of our construction:
Pick a Cartan involution  $\theta$ of $G$ (Section~\ref{sec:2}) with
$\theta(h) = -h$ and consider the subgroup $K := G^\theta$.
Let $\sF \subeq \cH$ be a finite-dimensional $U(K)$-invariant subspace consisting
of $J$-fixed vectors and consider the subspace
\begin{equation}
  \label{eq:betalimit}
 \sE := \beta(\sF) \quad \mbox{ for }\quad
 \beta(v) := \lim_{t \to -\frac{\pi}{2}} e^{it \cdot \partial U(h)}v. 
\end{equation}
Here the main difficulty is to show that $v$ is contained in the domain of the
selfadjoint operators $e^{it \cdot \partial U(h)}$ for $|t| < \pi/2$ and that the
limit $\beta(v)$ exists in the space of distribution vectors of $(U,\cH)$
(cf. \eqref{eq:pi2lim} and \cite{FNO22}). 

In \cite[Thm.~7.1]{MNO22b} we have seen that if $G$ is simple with
trivial center (hence isomorphic to
the adjoint group $\Ad(G) = \Inn(\g)$ of inner automorphisms),
then the positivity domain
$W = W_M^+(h)$ in the corresponding non-compactly causal symmetric space $M = G/H$,
endowed with the maximal invariant cone field,  
is connected and that its stabilizer subgroup
$G_W = \{ g \in G \: g.W = W \}$ is 
$G^h = \{ g \in G \: \Ad(g)h = h\}$. Therefore the {\it wedge space}
$\cW :=  G.W = \{ g.W \:  g \in G\}$ (the set of wedge regions in $M$)
can be identified with the adjoint orbit $\cO_h := \Ad(G)h \subeq \g$
of the Euler element in~$\g$ (Theorem~\ref{thm:6.4}(3)). This connects the above construction
based on distribution vectors with the  abstract
wedge spaces studied in \cite{MN21}. In fact, starting with an anti-unitary
representation $(U,\cH)$ of $G_{\tau_h}$ (see \eqref{eq:gext}), we can associate to the wedge
region $W = W_M^+(h)$ the standard subspace
\begin{equation}
  \label{eq:bgl}
 \sH_{\rm BGL}(W) := \Fix(J\Delta^{1/2}), \quad \mbox{ where } \quad
  J =  U(\tau_h)\quad \mbox{ and } \quad
  \Delta :=  e^{2\pi i \cdot \partial U(h)}.
\end{equation}
From the identity $G_W = G^h$,  it then follows that we obtain a well-defined
covariant extension
\[ \sH_{\rm BGL}(g.W) := U(g) \sH_{\rm BGL}(W).\]
This is called the {\it Brunetti--Guido--Longo (BGL) construction} 
(cf.~\cite{BGL02}, \cite{NO17}). 
Now
\[ \sH_{\rm BGL}(\cO) := \bigcap \{ U(g) \sH_{\rm BGL}(W) \: g\in G,
  \cO \subeq g.W \} \]
defines a net of real subspaces satisfying (I), (Cov) and (BW).
If $\sE \subeq \cH^{-\infty}$ is a real subspace for which the net $\sH_\sE$
in \eqref{eq:he1} satisfies (BW), we immediately obtain the relation
\begin{equation}
  \label{eq:netsrel}
  \sH_{\rm BGL}(\cO) \supeq \sH_\sE(\cO).
\end{equation}
In particular the net $\sH_{\rm BGL}(\cO)$ inherits (RS) from $\sH_\sE(\cO)$.
We presently do not know for which domains $\cO$ and subspaces $\sE$
equality holds in \eqref{eq:netsrel} (see \cite{MN22b} for more details).

The structure of this paper is as follows:
In Section~\ref{sec:2} we discuss the connection between Euler elements
and causal symmetric spaces. We also introduce the crown of the Riemannian
symmetric space $G/K$. In Section~\ref{sec:3} we then turn to the wedge
regions in causal symmetric spaces. 
Section~\ref{sec:4} reviews some concepts related to unitary representations
hat are illustrated by some examples.
Standard subspaces and some of their key properties are presented in
Section~\ref{sec:5}. Finally,  we briefly turn to Cayley type spaces 
and their causal compactifications in Section~\ref{sec:6}.

We hope that this survey provides 
a useful presentation of the fascinating connection
between causal manifold structures, representation theory and
AQFT that is accessible to a wide audience.

\section{Causal symmetric spaces and Euler elements}
\label{sec:2}

\noindent
In this section we discuss causal symmetric spaces and Euler elements
and in particular how Euler elements specify non-compactly causal
symmetric spaces (\cite{MNO22a}). 
First we review basic definitions and concepts related to causal symmetric spaces. On the algebraic level we deal with a
{\it symmetric Lie algebra} $(\g,\tau)$, i.e.,
$\g$ is a finite-dimensional real Lie algebra 
and $\tau$ is an involutive automorphism of~$\g$.
We then have 
\[\g = \fh \oplus \fq \quad \mbox{ with } \quad
  \fh := \fg^{\tau} :=\ker (\tau -\1)\quad\text{and}
  \quad \fq := \fg^{-\tau} := \ker (\tau +\ 1).\]
A {\it causal symmetric Lie algebra}
is a triple $(\g,\tau,C)$, where $(\g,\tau)$ is a symmetric Lie algebra and
$C \subeq \fq$ is a pointed generating 
closed convex cone 
invariant under the group $\Inn_\g(\fh) := \la e^{\ad \fh}\ra$. 
A causal symmetric Lie algebra $(\g,\tau,C)$ is called 
{\it compactly causal} (cc for short) if the 
pointed generating cone $C$ is {\it elliptic}, i.e., 
if its interior consists of elements $x$ which are {\it elliptic} 
in the sense that $\ad x$ is semisimple with purely imaginary spectrum. 
Typical examples of compactly causal Lie algebras 
arise from invariant pointed generating cones $C_\g \subeq \g$
satisfying $-\tau(C_\g) = C_\g$ by $C := \fq \cap C_\fg$.

\nin {\bf Duality of causal symmetric Lie algebras:} 
Causal symmetric Lie algebras come in pairs:  If $(\fg,\tau ,C)$ is causal,
then the {\it $c$-dual symmetric Lie algebra} is
$(\fg^c,\tau^c, iC)$ with 
$\fg^c = \fh \oplus i\fq$ and $\tau^c(x + iy) := x- iy$ is also causal.
This duality exchanges compactly causal and  non-compactly causal  spaces.

On the global level we call a quadruple $(G,\tau,H, C)$
a {\it causal symmetric Lie group} if
$G$ is a connected Lie group, $\tau$ an involutive automorphism of~$G$, 
$H \subeq G^{\tau}$ an open subgroup, and $C \subeq \fq$ a pointed 
generating $\Ad(H)$-invariant closed convex cone.
Then $M := G/H$ is called the associated {\it symmetric space}.
We refer to \cite{He78,Lo69} for the basic theory of symmetric spaces
and to \cite{HO97} for causality aspects.

The involution $\tau$ on $G$
induces an involution $\tau_\fg : \fg \to \fg$ such 
that $\exp (\tau_\fg (x)) = \tau (\exp (x))$ for $x \in \g$  
and $(\g,\tau_\g,C)$ is a causal symmetric Lie algebra.
For simplicity, we shall also write simply $\tau$ instead of $\tau_\g$
on the Lie algebra. 
We say that $M$ is {\it compactly causal}, resp.,
{\it non-compactly causal} if
$(\g,\tau,C)$ has this property. The space
$\fq$ can be identified with the tangent space of $M$ at the origin $x_M=eH$ by
the tangent map 
\[ T_0(q \circ \exp\res_\fq) \: \fq \to T_{x_M}(M), \quad
  x \mapsto \left.\dfrac{d}{dt}\right|_{t=0} \exp (tx).x_M. \] 


If $(G,\tau,H, C) $ is a causal symmetric Lie group, then  $G/H$ becomes a 
{\it causal $G$-space} in the sense that the diffeomorphisms
$\sigma_g(xH) := gxH$ on $G/H$ lead to a $G$-invariant cone field
\[ C_{gH} := T_{x_M}(\sigma_g) C \subeq T_{gH}(M) \quad \mbox{ for } \quad g \in G.\] 
If $G$ is simple, as we are assuming in this article, with finite center,
then, for compactly causal spaces, there are periodic causal curves,
so that no global causal order exists on $M$. However, this pathology
can be resolved by passing to the simply connected covering
$\tilde M$ (\cite{NeuO00}). A typical example is Anti-de Sitter space 
(see Example~\ref{ex:dsads} below). Non-compactly causal spaces always carry a 
global order which is {\it globally hyperbolic} in the sense that
all order intervals are compact (\cite[Thm.5.3.5]{HO97}).

\nin {\bf Cartan involutions:} 
An involutive automorphism $\theta$ of $\g$ is called a  {\it Cartan involution}
if $\Inn_\fg (\fg^\theta)$ is a
maximal compact subgroup of the group $\Inn(\fg)$ of inner automorphisms.
We write 
$\fk = \fg^\theta$, $\fp = \fg^{-\theta}$,
$K := G^\theta$ and note that $G/K$
is a Riemannian symmetric space (see \cite{He78}
for more on the geometry of Riemannian
symmetric spaces).

If $\fm$ is a $\theta$-stable subspace of $\fg$ then we write $\fm_\fk 
= \fm \cap \fk$ and $\fm_\fp= \fm\cap \fp$, so that $\fm = \fm_\fk\oplus \fm_\fp$.
If $\tau : \fg\to \fg$ is an
involution, then there exists a Cartan involution $\theta$
such that $\theta\tau = \tau\theta$ (\cite[Prop.~I.5]{KN96}) and
we thus obtain decompositions
$\fh=\fh_\fk\oplus \fh_\fp$ and $\fq= \fq_\fk \oplus \fq_\fp$.

\nin {\bf Classifying ncc spaces by Euler elements:} 
Let $h \in \g$ be an Euler element. By \cite[Cor.~II.9]{KN96} there exists a
Cartan involution $\theta$ such that $\theta(h) = -h$.
We write $\cE(\fg)\subeq \g$ for the set of Euler elements in~$\g$.
Then $G$ acts on $\cE(\fg)$ by the adjoint action
with finitely many orbits (see \cite[Thm.~3.10]{MN21}
and \cite{Kan98, Kan00} for a classification). 

If $(\g,\tau,C)$ is causal, then an Euler element $h \in C^\circ$
is said to be {\it causal}. 
For any pair $(\theta, h)$ of a Cartan involution $\theta$ and an Euler
element satisfying $\theta(h) = -h$,
the involution $\tau = \tau_h \theta$ with
$\tau_h = e^{\pi i \ad h}$ makes $h$ causal for $(\g,\tau,C)$, where
$C\subeq \fq$ is the closed convex cone generated by $\Inn_\g(\fh)h$.
In \cite[Thm.~4.21]{MNO22a} this construction is used to classify
irreducible non-compactly causal symmetric Lie algebras in
terms of $\Inn(\g)$-orbits of Euler elements. In particular,
for  $(\g, \tau,C)$ 
non-compactly causal, the set of causal Euler elements
is non-empty and contained
in a single $\Inn(\g)$-orbit. 
By duality, this also yields a classification of irreducible cc symmetric
spaces. From the dual perspective, focusing on cc spaces, this classification
goes back to~\cite{Ol91}.

\nin {\bf The crown of the Riemannian symmetric space $G/K$:} 
Throughout our constructions,
analytic continuation of functions and orbit maps play a central role.
In this respect an important tool is the
{\it crown domain of the Riemannian symmetric space $G/K$}.
To describe it, 
we assume that $G$ is contained in a complex Lie group $G_\C$
with a Lie algebra $\fg_\C$.
Let
\begin{equation}
  \label{eq:omega}
  \Omega_\fp = \{x\in \fp \: \sigma(\ad x) \subset (-\pi/2,\pi/2)\}, 
\end{equation}
where $\sigma(\ad x)$ denotes the spectrum of $\ad x$, 
and define the {\it crown of $G/K$} by
\begin{equation}
  \label{eq:crown-1}\Xi  = G(\exp i\Omega_\fp) K_\C\subset G_\C/K_\C, 
  \quad \mbox{ where } \quad K_\C = K \exp(i\fk).
\end{equation}
Then $\Xi$ is open in $G_\C/K_\C$ by \cite{KS04,KS05}
and 
every eigenfunction of the algebra of $G$-invariant differential
operators on $G/K$ extends to a holomorphic function on
$\Xi$ (\cite{KrSc09}).
Of particular interest for us is the following result 
of Gindikin and Kr\"otz on the realization of ncc symmetric spaces
$G/H$ in the boundary of a crown domain \cite[Lem.~ 3.4,Thm~3.5]{GK02}:

\begin{theorem} \label{thm:boundaryorb}
  Let $(\g,\tau,C)$ be a simple ncc symmetric Lie algebra,
  specified by the causal
  Euler element    $h \in \fq_\fp$ via $\tau = \theta \tau_h$
  and $G$ a connected Lie group contained in a complex group~$G_\C$
  with Lie algebra $\g_\C$. Let $K_\C = K \exp(i\fk) \subeq G_\C$ 
and  $s_H := \exp\big(\frac{\pi i}{2}h\big)$. Then 
\[ H := s_H K_\C s_H \cap G \subeq G^\tau \]
is an open subgroup, 
\[ x_H :=s_H K_\C\in \partial \Xi \quad \mbox{ and  } \quad G/H \cong G.x_H. \]
\end{theorem}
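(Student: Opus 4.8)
The plan is to place $x_H=s_HK_\C$ in the closed crown $\overline\Xi$, to identify $H$ with the $G$-stabilizer of $x_H$, and then to read off $G/H\cong G.x_H$ from the orbit map; that $H$ is an open subgroup of $G^\tau$ will fall out of the stabilizer computation.

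\emph{The point $x_H$ lies on $\partial\Xi$.} Since $h$ is an Euler element, $\ad h$ has spectrum $\{-1,0,1\}$ with $\g_{\pm1}(h)\neq\{0\}$ (see \eqref{eq:euler1}), so $\ad\bigl(\tfrac{\pi}{2}h\bigr)$ has spectrum exactly $\{-\tfrac{\pi}{2},0,\tfrac{\pi}{2}\}$. As $h\in\fq_\fp\subseteq\fp$, this places $\tfrac{\pi}{2}h$ in $\overline{\Omega_\fp}\setminus\Omega_\fp$, with $\Omega_\fp$ as in \eqref{eq:omega}; indeed $\tfrac{t\pi}{2}h\in\Omega_\fp$ for $t\in[0,1)$, so $x_H=\exp\bigl(\tfrac{\pi i}{2}h\bigr)K_\C$ is a limit of points $\exp\bigl(\tfrac{t\pi i}{2}h\bigr)K_\C\in\Xi$ and hence lies in $\overline\Xi$. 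To upgrade this to $x_H\in\partial\Xi=\overline\Xi\setminus\Xi$, i.e.\ to rule out $x_H\in\Xi$, I would invoke the structure theory of the crown from \cite{KS04,KS05}: the polar-type map $G\times_K\Omega_\fp\to\Xi$ is a diffeomorphism, and one must argue that $\exp(iX)K_\C$ with $X\in\fp$ can lie in $\Xi$ only if $X$ is $\Ad(K)$-conjugate into the \emph{open} set $\Omega_\fp$, which $\tfrac{\pi}{2}h$ is not. This boundary-control step, which rests on the fine geometry of $\Xi$ rather than on soft arguments, is the main obstacle.

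\emph{The subgroup $H$ and the stabilizer.} The $G_\C$-stabilizer of $x_H=s_HK_\C$ is the conjugate $s_HK_\C s_H^{-1}$. I would first observe that $s_H^2=\exp(\pi i h)\in K_\C$: using the $\sl_2$-triple $(2h,e,f)$ through $h$, which exists and can be chosen $\theta$-compatible because $\g$ is simple (forcing $\g_0(h)=[\g_1(h),\g_{-1}(h)]\ni h$), the element $\exp(\pi i h)$ corresponds under $\sl_2(\C)\to\g_\C$ to $\diag(i,-i)\in\SO(2,\C)\subseteq K_\C$. Hence $s_HK_\C s_H=s_HK_\C s_H^{-1}$, so the set $H=s_HK_\C s_H\cap G$ equals $G\cap s_HK_\C s_H^{-1}=\mathrm{Stab}_G(x_H)$, a closed subgroup, with Lie algebra $\g\cap\Ad(s_H)\fk_\C=\g\cap e^{\frac{\pi i}{2}\ad h}\fk_\C$. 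Decomposing $\fk_\C$ into $\ad h$-eigenspaces and using $\theta(h)=-h$ (so $\theta$ interchanges $\g_1(h)$ and $\g_{-1}(h)$ and fixes $\g_0(h)$) together with $\theta|_\fk=\id$, a direct computation gives, writing $x_0\in\g_0(h)$ and $x_\pm\in\g_{\pm1}(h)$,
\[
 \g\cap e^{\frac{\pi i}{2}\ad h}\fk_\C \;=\; \{\,x_0+x_++x_- \;:\; \theta(x_0)=x_0,\ x_-=-\theta(x_+)\,\} \;=\; \g^{\theta\tau_h}=\g^\tau=\fh ,
\]
the last equalities using $\tau_h=e^{\pi i\ad h}$ and $\tau=\theta\tau_h$. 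Finally $H\subseteq G^\tau$: since $\exp(2\pi i h)$ is central in $G_\C$ (the eigenvalues of $\ad h$ being integers), one checks directly that $\tau$ fixes every element of $s_HK_\C s_H^{-1}\cap G$. Thus $H$ is an open subgroup of $G^\tau$.

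\emph{Identifying the orbit.} With $H=\mathrm{Stab}_G(x_H)$, the orbit map $G\to G.x_H$, $g\mapsto g.x_H$, descends to a $G$-equivariant bijection $G/H\to G.x_H$ which is a diffeomorphism onto the orbit $G.x_H$, viewed as an immersed submanifold of $G_\C/K_\C$; this yields the asserted isomorphism. (One may further match the symmetric-space involution of $G/H$ with the restriction to $G.x_H$ of the geodesic symmetry of $G_\C/K_\C$ at $x_H$, using $\Ad(s_H)^2=\tau_h$, but this is not needed for the statement.) For the full details I would follow \cite[Lem.~3.4, Thm.~3.5]{GK02}.
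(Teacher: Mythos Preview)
The paper does not supply its own proof of this theorem; it is quoted as a result of Gindikin--Kr\"otz \cite[Lem.~3.4, Thm.~3.5]{GK02}. So there is no in-paper argument to compare against, and your sketch is being assessed on its own merits.

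Your outline is largely sound, but the step reconciling $s_H K_\C s_H$ with the stabilizer $s_H K_\C s_H^{-1}$ contains a genuine error. You claim $s_H^2=\exp(\pi i h)\in K_\C$ because under the $\fsl_2$-embedding it corresponds to $\diag(i,-i)\in\SO(2,\C)$. But $\diag(i,-i)^\top\diag(i,-i)=-I\neq I$, so $\diag(i,-i)\notin\SO(2,\C)$. Concretely, for $G=\SL_2(\R)\subseteq\SL_2(\C)$ with $h=\tfrac12\diag(1,-1)$ one finds that $s_HK_\C s_H\cap G$ consists of matrices of the form $\pm\begin{pmatrix}\sinh b&\cosh b\\-\cosh b&-\sinh b\end{pmatrix}$, each of which squares to $-I$, while $-I$ itself does not lie in this set; so it is not a subgroup at all. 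By contrast $s_HK_\C s_H^{-1}\cap G$ is exactly $G^\tau$. What \emph{is} true is that $\exp(2\pi i h)$ is central (your later observation) and that $s_H^2$ normalizes $K_\C$, but neither gives $s_H^2\in K_\C$. The discrepancy suggests the displayed formula should read $s_HK_\C s_H^{-1}$ (the actual stabilizer), and you should work with that rather than try to force the equality.

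Once you use the stabilizer $s_HK_\C s_H^{-1}$, the rest of your argument is correct: the Lie algebra computation $\g\cap\Ad(s_H)\fk_\C=\fh$ is right, the inclusion $H\subseteq G^\tau$ via $\theta|_{K_\C}=\id$ and centrality of $\exp(2\pi i h)$ goes through, and the orbit identification is standard. You also correctly flag the assertion $x_H\notin\Xi$ as the substantive point requiring the polar description of the crown from \cite{KS04,KS05}; this is precisely what \cite{GK02} supplies.
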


Thus, up to covering, every ncc symmetric space can be realized
in the boundary of the crown of $G/K$. So the crown can be used
as a tool to translate 
between different symmetric spaces, in particular the
Riemannian symmetric spaces $G/K$ and all ncc spaces of the form $G/H$.
In particular, boundary values of holomorphic  functions on $\Xi$
lead to distributions on $G/H$, and a generalization of this
idea to vector bundles  can be used to  realize unitary representations of 
$G$ in spaces of distributional sections of vector bundles over~$G/H$
(cf.\ \cite{FNO22}).
We also refer to \cite{GKO03,GKO04} and \cite{NO20} for some results in this
direction.

In this context, an important result by Kr\"otz--Stanton \cite{KS04} is that,
if $G$ is contained in a simply connected complex group $G_\C$, 
for any $K$-finite vector in an irreducible unitary $G$-representation
$(U,\cH)$, there exists a holomorphic extension $\hat{U^v}$ of the orbit map $U^v$ to 
\begin{equation}
  \label{eq:holext}
  \hat{U^v} \: G \exp(i\Omega_\fp)K_\C \to \cH.   
\end{equation}
In particular, $e^{it\cdot \partial U(h)}v   = \hat{U^v}(\exp(ith))$ is
defined for   $|t| < \pi/2$
(cf.\ \eqref{eq:betalimit}).

\begin{example} \label{ex:dsads}
  {\rm(De Sitter and Anti de-Sitter space)} 
An important example of a non-compactly causal irreducible symmetric space
is {\it de Sitter space} 
\begin{equation}
  \label{eq:desitter1}
  \dS^d := \{ (x_0,x_1, \ldots, x_d) \in \R^{1,d} \: x_0^2
  -  x_1^2 - \cdots - x_d^2 = - 1\} 
\end{equation}
with 
$G = \SO_{1,d}(\R)_e$, $H = G_{\be_1} \cong  \SO_{1,d-1}(\R)_e$, and 
$C \subeq T_{\be_1}(\dS^d) \cong  \be_1^\bot$ given by 
\[ C = \{  (x_0, 0, x_2, \ldots, x_{d-1},x_d) \:   x_0 \geq 0, x_0^2 \geq
x_2^2 + \cdots + x_{d}^2\},\]
the closed light cone in $\R^{1,d-1}$.

Likewise {\it Anti-de Sitter space} is a compactly causal irreducible
symmetric space 
\begin{equation}
  \label{eq:adesitter1}
  \AdS^d := \{ (x_0,x_1, \ldots, x_d)
  \in \R^{2,d-1} \: x_0^2 + x_1^2 - x_2^2 - \cdots
  - x_d^2= 1\} 
\end{equation}
with 
$G = \SO_{2,d-1}(\R)_e$, $H = G_{\be_1} \cong \SO_{1,d-1}(\R)_e$, and 
$C \subeq T_{\be_1}(\AdS^d) \cong  \be_1^\bot$ given by 
\[ C = \{  (x_0,0, x_2, \ldots, x_d)\: x_0 \geq 0, x_0^2 \geq x_2^2 + \cdots
+ x_d^2\}.\]
These spaces are dual to each other and can both be realized in the 
complex sphere
\[  \dS^d_\C :=  \{ z \in \C^{d+1} \: z_0^2 - z_1^2 - \cdots - z_d^2 = -1\}
\cong \SO_{d+1}(\C)/\SO_d(\C).\] 
In the case of de Sitter space $G^{\dS}/H^{\dS}$, the corresponding
Riemannian symmetric space $G^{\dS}/K^{\dS}$
is naturally realized as a $d$-dimensional 
hyperbolic space in $\dS^d_\C$: 
\[\bH^d := \{i(x_0,\bx)\in \R^{d+1}\: x_0^2 -\bx^2 =1, x_0>0\}
  = \SO_{1,d}(\R)_e.i\be_0\]
isomorphic to $\SO_{1,d}(\R)_e/\SO_d(\R).$
The crown of $\bH^d$ is the domain 
\[ \Xi = \dS^d_\C \cap (\R^{d+1} + i V_+)
  = \{ z = (z_0,\bz) \in \R^{d+1}+iV_+ \: z_0^2 - \bz^2 =-1\}, \]
where 
$V_+ :=\{(x_0,\bx)\in \R^{d+1} \: x_0^2 - \bx^2>0, x_0>0\}$
is the open upper light cone 
(see \cite[Prop.~3.2]{NO20}).
Note that the Euler element $h \in \so_{1,d}(\R)$ defined by
$h.x = (x_1, x_0,0,\ldots, 0)$ satisfies
\[ \exp(zh).i \be_0 = i \cosh(z) \be_0 + i \sinh(z) \be_1 \in \Xi
  \quad \mbox{ for }  \quad |\Im z| < \frac{\pi}{2} \]
and
$\exp\big(\pm\frac{\pi i}{2}h\big).i \be_0 = \mp \be_1 \in \dS^d$, which exhibits
$\dS^d$ as a $\SO_{1,d}(\R)_e$-orbit in $\partial \Xi$ 
(cf.\ Theorem~\ref{thm:boundaryorb}).

As described above, we expect that the embedding of de Sitter space into
the boundary of $\Xi$, combined with the Kr\"otz--Stanton extension
technique (cf.~\eqref{eq:holext}), 
can be used to realize all irreducible unitary representation of
the Lorentz group in distributional sections of vector bundles
over $\dS^d$. In the particular case of irreducible representations
with  a $K$-fixed vector (spherical representations),
we are simply dealing with functions.
This case has been addressed with methods based on reflection positivity
on the sphere $\bS^d$ in \cite{NO20}. The realization on $\Xi$ leads a
reproducing kernel space of holomorphic functions
whose kernel is given by a hypergeometric function 
\[\Psi_m((z_0,\bz), (w_0,\bw))
  ={}_2F_1\left(\lambda + \frac{d-1}{2}, -\lambda + \frac{d-1}{2}, \frac{n}{2},
    \frac{1-z_0\oline{w_0} + \bz \oline\bw}{2}\right),\]
where $m > 0$ and 
\[ \lambda :=  \sqrt{\Big(\frac{d-1}{2}\Big)^2-m^2}
\in i \R \cup \Big[0,\frac{d-1}{2}\Big).\] 
For $w = \be_1$, we obtain the $H$-invariant function  
\[\Psi_m((z_0,\bz),\be_1) ={}_2F_1\left(\lambda + \frac{d-1}{2}, -
    \lambda + \frac{d-1}{2}, \frac{n}{2}, \frac{1+z_1}{2}\right)\]
which has distributional boundary values on $\dS^d$
with a singularity at $z_1=1$ (cf.~\cite[Thm.~2.2.4]{GKO04}).
We also refer to \cite{BM96} for closely related results by
J.~Bros and H. Moschella.
\end{example}

\begin{example}[$\fsl_2(\R)\cong\so_{1,2}(\R)\cong \su_{1,1}(\C)$]\label{ex:SL2}
(a) The lowest dimensional example of a simple 
causal symmetric Lie algebra arises for the $3$-dimensional
Lie algebra  $\fg= \fsl_2(\R)$ and
  the corresponding connected Lie group $G=\SL_2(\R)$ as follows.
In the basis 
\begin{equation}\label{eq:hef}
h= \frac{1}{2} \begin{pmatrix} 1 & 0 \\ 0 & -1\end{pmatrix},\quad
e = \begin{pmatrix} 0 & 1\\ 0 & 0\end{pmatrix}
\quad\text{and}\quad 
f= \begin{pmatrix} 0 & 0\\ 1 & 0\end{pmatrix}.
\end{equation}
the element $h$ is an Euler element with
\[[h,e] = e,\quad [h,f]=-f\quad\text{and}\quad [e,f] = 2h.\]
Further $\theta(x) = -x^\top$ is a Cartan involution with $\theta(h) = -h$,
so that $\tau := \theta \tau_h$
specifies a non-compactly causal symmetric Lie algebra with 
\[\tau  \: \begin{pmatrix}  a & b\\ c & d\end{pmatrix} \mapsto
  \begin{pmatrix} a & -b\\ -c & d\end{pmatrix}.\]
In particular $\fh = \R h $, $\fq = \R e \oplus \R f$ and
$H:=G^\tau = G^h\cong \R^\times$ is the subgroup of diagonal elements.
Note that the adjoint orbit $\cO_h \cong G/G^h
\cong \SO_{1,2}(\R)_e/\SO_{1,1}(\R)_e$ can be identified
with de Sitter space $\dS^2$ (see \cite{MN22a} for a detailed discussion). 

The cone  generated by $\Inn_\g(\fh)h$ is
$C_{ncc} := [0,\infty) e \oplus [0,\infty)f$ and
the quarter plane $C_{cc} =  [0,\infty) e \oplus [0,\infty)(-f)$ is also
$\Ad(H)$-invariant. So  the symmetric Lie algebra 
$(\fg,\tau ,C_{ncc})$ is non-compactly causal
and $(\fg, \tau, C_{cc})$ is compactly causal.
 
\nin (b) Now replace $\SL_2(\R)$ by its adjoint group
$G:=\SO_{1,2}(\R)_e \cong  \Inn(\g) = \PSL_2(\R)$.
The Cartan involution $\theta$ commutes with $\tau$, so that
$\theta \in G^{\tau}$. As $\theta (e+f) = -(e+f)$,
the cone $C_{ncc}$ is not $G^\tau$ invariant.
On the other hand, $(G,G^\tau,C_{cc})$ is still
compactly causal.
This shows that the causal properties of $(G,H,C)$ depend on the group
$\pi_0(H) = H/H_e$  of connected components of~ $H$.
Note that $G/G^\tau_e = G/G^h\cong \dS^2$ is an example of a causal space of Cayley
type (cf.\ Section~\ref{sec:6}). 

\nin (c) The group $G \cong \PSU_{1,1}(\C) \cong \SO_{1,2}(\R)_e$
acts by M\"obius transformations
\[ \pmat{a & b \\ \oline b & \oline a}.z := \frac{az + b}{-\oline b z + \oline a} \] 
on the unit disc $\bD := \{ z \in \C \: |z| < 1 \}$ and
since $K := \{g \in G \:  g.0 = 0 \}\cong \T$ is maximal compact,
this leads to the identification
$\bD \cong G/K$. The crown domain of $\bD$ can be identified with the
bidisc $\bD^2$ (\cite[Thm.~7.7]{KS05}), into which $\bD$ embeds via $z \mapsto (z, \oline z)$
as a totally real submanifold and $G$ acts on $\bD^2$ by
$g.(z,w) = (g.z, \oline g.w)$.

For the Euler element
$h := \frac{1}{2}\pmat{0 & 1 \\ 1 & 0} \in \su_{1,1}(\C)$, the orbit map
\[ \R \to \bD, \quad t \mapsto \exp(th).0
  = \pmat{ \cosh(t/2) & \sinh(t/2)  \\  \sinh(t/2) & \cosh(t/2)}.0
  = \tanh(t/2) \]
extends to a biholomorphic map
\begin{equation}
  \label{eq:tanh}
 \cS_{\pm \pi/2} = \Big\{ z \in \C \: |\Im z| < \frac{\pi}{2}\Big\}
 \to \bD, \quad z \mapsto \tanh(z/2).
\end{equation}
This implies that, in the closed polydisc $\oline\bD^2$, we have 
\[ \exp\Big(\frac{\pi i}{2}h\Big).(0,0) = i \tan\Big(\frac{\pi}{4}\Big)(1,1)
  = (i,i) \]
and with $H := \{g \in G \: g.i = i, \oline g.i = i \}$ it follows
that $G.(i,i) \cong G/H \cong \dS^2$ is a non-compactly causal symmetric space,
embedded in the Shilov boundary $\T^2$ of $\bD^2$
(cf.\ Theorem~\ref{thm:boundaryorb}). 
\end{example}

  \begin{example} \label{ex:comp-grp} {\rm(Complex case and group case)} 
    (a)  If $\g$ is a complex Lie algebra
    and $\tau$ is antilinear, then $\fg=\fh_\C$ and $\fh$ is a real form of $\g$,
    so that we have $\tau(x+iy) = x-iy$ for  $x,y\in\fh$
    and $\fq = i \fh$.
    We assume that the involution $\tau$ on $\g$
    integrates an involution, also denoted $\tau$,
    on the group $G$. Then, with $H=G^\tau_e$, the
    symmetric space  $G/H$ is causal if and only if
    there exists a pointed generating
    $\Ad(H)$-invariant cone $C \subset \fh$,
    which means that $\fh$ is a simple hermitian Lie algebra.
    Then $C$ is elliptic, so that  $iC $ is a  hyperbolic cone
    in $\fq$, and thus $(\fg,\tau, iC)$ is ncc.

    \nin (b) The $c$-dual $(\g^c,\tau^c)$
    of $(\g,\tau)$ is isomorphic to the symmetric pair $(\fh\oplus \fh, \tau^c_\fg)$ with the flip involution $\tau^c (x,y) = (y,x)$.
            In particular
        \[ (H\times H)^{\tau^c}= \diag (H)=\{(a,a)\: a\in H\}\cong H \]
        is the diagonal subgroup and 
$\fq^c= \{(x,-x)\: x\in \fh\}$. 
The symmetric space $(H\times H)/\diag(H)$ can be identified with the group $H$,
where the $H\times H$-action on $H$ is given by $(a,b).c= acb^{-1}$. 
The elliptic invariant cone $C \subeq \fh$ specifies an elliptic cone
$C^c= \{(x,-x)\: x\in C\}\subeq \fq^c$ and $(H\times H, \diag (H),C^c)$ is compactly causal.
\end{example}

\begin{theorem} {\rm(Cone Extension Theorem)} {\rm(\cite[Thm.~4.5.8]{HO97},
\break \cite[Thm.~2.4]{NO22b})}  \label{thm:ext}
If $(\g,\tau,C)$ is ncc, then
there exists an invariant cone $C_{\fg^c} \subeq \g^c$ in the dual Lie algebra
$\g^c$ such that
\[ C = \fq \cap i C_{\fg^c}. \]
\end{theorem}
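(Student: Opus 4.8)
The plan is to pass to the $c$-dual and reduce to a known extension property of invariant cones in the compactly causal setting. By the duality of causal symmetric Lie algebras recalled above, $(\g^c,\tau^c,iC)$ is \emph{compactly} causal, with $\fh^c=\fh$ and $\fq^c=i\fq$. Since multiplication by $i$ is a real-linear bijection of $\g_\C$ that preserves intersections, the asserted equality $C=\fq\cap iC_{\g^c}$ is equivalent to $iC=\fq^c\cap\widehat C$ for the $\Inn(\g^c)$-invariant cone $\widehat C:=-C_{\g^c}$. So it suffices to prove: \emph{if $(\g_1,\tau_1,C_1)$ is a compactly causal symmetric Lie algebra with $\g_1$ simple, then the elliptic $\Inn_{\g_1}(\fh_1)$-invariant cone $C_1\subseteq\fq_1$ is the trace $\fq_1\cap\widehat C$ of some $\Inn(\g_1)$-invariant pointed generating closed convex cone $\widehat C\subseteq\g_1$}; applying this to $(\g_1,\tau_1,C_1)=(\g^c,\tau^c,iC)$ and setting $C_{\g^c}:=-\widehat C$ gives the claim. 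In the remaining case, where $\g$ is complex and $\g^c\cong\fh\oplus\fh$ as in Example~\ref{ex:comp-grp}, the dual cone is the diagonal cone $\{(x,-x):x\in C\}$ in $\fh\oplus\fh$ and $\widehat C=C\times(-C)$ works directly; so from now on $\g_1$ is simple.

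For the reduced statement I would first recall, from the classification of compactly causal symmetric Lie algebras (\cite{MNO22a,HO97,Ol91}), that a simple real Lie algebra admitting such a structure is hermitian, fix a Cartan involution $\theta_1$ commuting with $\tau_1$, set $\fk_1=\g_1^{\theta_1}$, and choose a $\tau_1$-stable compactly embedded Cartan subalgebra $\ft\subseteq\fk_1$ with $\ft_{\fq_1}:=\ft\cap\fq_1$ maximal abelian in $\fq_1$ and meeting $C_1^\circ$; the latter is possible precisely because $C_1$ is elliptic, generating and $\Inn_{\g_1}(\fh_1)$-invariant. Then I would use the Vinberg--Paneitz description of invariant cones in the hermitian Lie algebra $\g_1$: with $p_\ft\colon\g_1\to\ft$ the orthogonal projection for an invariant inner product (so that $p_\ft$ maps $\fq_1$ into $\ft_{\fq_1}$, since $\fh_1\perp\fq_1$), every pointed generating $\Inn(\g_1)$-invariant cone $\widehat C$ is recovered from the $W_{\fk_1}$-invariant cone $\widehat C\cap\ft$ by $\widehat C=\{x\in\g_1:p_\ft(\Inn(\g_1).x)\subseteq\widehat C\cap\ft\}$, and one has, for a suitable sign, $C_{\min}\subseteq\widehat C\subseteq C_{\max}$ with $C_{\min}$ the minimal invariant cone and $C_{\max}=C_{\min}^\star$ its dual. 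Taking $\widehat C:=\oline{\conv}(\Inn(\g_1).C_1)$ --- the smallest invariant closed convex cone containing $C_1$ --- one checks that $C_1$ is contained in the maximal $\fh_1$-invariant elliptic cone of $\fq_1$, which in turn lies in $C_{\max}\cap\fq_1$; hence $\widehat C\subseteq C_{\max}$ is pointed, and it is generating because $\g_1$ is simple and $C_1$ has interior points.

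It then remains to prove $\fq_1\cap\widehat C=C_1$, the inclusion ``$\supseteq$'' being trivial, and I expect ``$\subseteq$'' --- that forming the invariant hull and cutting back to $\fq_1$ does not enlarge the cone --- to be the main obstacle. The plan for it is to use a linear convexity theorem for the symmetric Lie algebra $\g_1=\fh_1\oplus\fq_1$: for $x\in\ft_{\fq_1}$, the image of the orbit $\Inn_{\g_1}(\fh_1).x$ under $\fq_1\to\ft_{\fq_1}$, $z\mapsto p_\ft(z)$, is the convex hull of the finite orbit $W(\fh_1,\ft_{\fq_1})x$ together with the cone spanned by the ``$\fq_1$-positive'' restricted roots. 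Feeding this into the projection description of $\widehat C$ and using $\Inn_{\g_1}(\fh_1)$-invariance, any $y\in\fq_1\cap\widehat C$ satisfies $p_\ft(\Inn_{\g_1}(\fh_1).y)\subseteq C_1\cap\ft$, and then ellipticity of $C_1$ forces $y\in C_1$; alternatively, one separates $y\notin C_1$ from $\widehat C$ by realising an exposed functional of $C_1$ as the restriction to $\fq_1$ of an $\Inn(\g_1)$-invariantly supported momentum functional of a unitary lowest weight representation of $\g_1$ whose positive cone contains $C_1$. This reproduces \cite[Thm.~4.5.8]{HO97} and \cite[Thm.~2.4]{NO22b}, to which the remaining routine verifications can be referred.
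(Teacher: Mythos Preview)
The paper does not prove this theorem at all: it is stated with citations to \cite[Thm.~4.5.8]{HO97} and \cite[Thm.~2.4]{NO22b} and no argument is given, so there is no ``paper's own proof'' to compare your proposal against. What you have sketched is essentially the strategy of those references---pass to the $c$-dual to reduce to the compactly causal (hence hermitian) case, take the $\Inn(\g^c)$-invariant hull of the elliptic cone, control it via the Paneitz--Vinberg correspondence between invariant cones and their Weyl-chamber traces, and recover the original cone by cutting back to $\fq^c$ using a Kostant-type linear convexity theorem for the symmetric pair.

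Your reduction and the construction of $\widehat C$ are fine; the one place where your write-up is genuinely thin is the implication $y\in\fq_1\cap\widehat C\Rightarrow p_\ft(\Inn_{\g_1}(\fh_1).y)\subseteq C_1\cap\ft$. From the Paneitz--Vinberg description you only know $p_\ft(\Inn(\g_1).y)\subseteq\widehat C\cap\ft$, and you have not yet argued why $\widehat C\cap\ft_{\fq_1}=C_1\cap\ft_{\fq_1}$, which is exactly the content you need. In the references this step is handled by first identifying $\widehat C\cap\ft$ as the $W_{\fk_1}$-saturation of $C_1\cap\ft_{\fq_1}$ plus the minimal cone, and then showing via the convexity theorem for the little Weyl group $W(\fh_1,\ft_{\fq_1})$ that intersecting back with $\ft_{\fq_1}$ returns $C_1\cap\ft_{\fq_1}$; your sentence ``feeding this into the projection description'' elides precisely this computation. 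The alternative moment-map argument you mention also works but likewise needs the observation that every extreme functional on $C_1$ arises as the restriction of a coadjoint-invariant functional on $\g_1$, which is again a nontrivial input from \cite{HO97} or \cite{Ne00}. With those references invoked, your outline is correct.
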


\section{Wedge regions in causal symmetric spaces}
\label{sec:3}
\noindent

In this section we discuss several results on wedge regions
associated to Euler elements in a causal symmetric space
and how they can be characterized.

The first domain is  the positivity domain
\[W^+_M(h) :=\{m\in M\: X^M_h(m)\in C^\circ_m\}\]
of  the modular flow (see \eqref{eq:wmplus} in the introduction). 

To specify the second type of domains, we start with the complex {\it tube domain}
\[ \cT_M := G \exp(i C^\circ) \subeq M_\C := G_\C/H_\C \]
(here we assume for simplicity that $G \subeq G_\C$ and $H = H_\C \cap G$)
if $M$ is compactly causal, and 
\[ \cT_M := G \exp(i C_{\rm res}^\circ) \subeq \cT_M \]
if $M$ is non-compactly causal.
Here the domain $C^\circ_{\rm res} \subeq C^\circ$ is specified as follows:
We first note that 
$C^\circ = \Ad(H)(C^\circ \cap \fq_\fp)$. If $h \in C^\circ \cap \fq_\fp$ is a
causal Euler element, then $C^\circ_{\rm res}$ is the unique
$\Ad(H)$-invariant domain specified by 
\begin{align*}
 C^\circ_{\rm res} \cap \fq_\fp
  &= \Big\{ x \in C^\circ \cap \fq_\fp \:
  \sigma\Big(\ad \Big(x - \frac{\pi}{2} h\Big)\Big) \subeq \big(-\frac{\pi}{2},
  \frac{\pi}{2}\big)\Big\}\\
  &  = C^\circ \cap \fq_\fp \cap \Big( \frac{\pi}{2} h + \Omega_\fp\Big).
\end{align*}
For a non-compactly causal space we thus obtain for $G = \Inn(\g)$ a natural
identification of the tube domain $\cT_M$ with the crown domain
$\Xi$ of $G/K$ (see \eqref{eq:crown-1}), both realized in $G_\C/K_\C$ 
(\cite[Thm.~5.4]{NO22b}).

The modular flow
$\alpha_t(m) = \exp(th).m$ extends to a holomorphic flow on $M_\C$, and we
define the {\it KMS wedge domain} by
\[ W_M^{\rm KMS}(h)
  := \{ m \in M \: 
(\forall z \in \cS_\pi)\ \alpha_z(m) \in \cT_M\}.\]
This terminology  is inspired by the close connection with
KMS conditions and standard subspaces;
see \cite{NO19} and Proposition~\ref{prop:standchar} below
which establishes an analogy between the pair $(\sV,\cH)$ for a 
standard subspace $\sV \subeq \cH$ and the pair
$(W_M^{\rm KMS}(h), \cT_M)$.

The following theorem displays some key information on wedge domains.

\begin{theorem} \label{thm:6.4} 
  {\rm(Wedge domains in causal symmetric spaces)} 
Let $(G,\tau^G)$ be a connected symmetric Lie group corresponding 
to the causal symmetric Lie algebra 
$(\g,\tau,C)$. Then the following assertions hold:
\begin{itemize}
\item[\rm(a)] If $(\g,\tau,C)$ is compactly causal
and $h \in \fh$ is an Euler element for which $\tau_h(C) = - C$,
then
$W_M^+(h) = W_M^{\rm KMS}(h)$ and the connected component containing the base point 
$eH$ in its boundary is
\[ G^h_e.\Exp(C_+^\circ + C_-^\circ),\quad \mbox{ where } \quad
C_\pm = \pm C \cap \fq_{\pm 1}(h).\] 
\item[\rm(b)] Suppose that $(\g,\tau,C)$ is non-compactly causal,
  $h \in \fh$ is an Euler element with
  \[ \tau_h(C) = - C, \quad G = \Inn(\g), \quad
    G^c := \Inn_{\g_\C}(\g^c)\quad \mbox{ and } \quad H := G \cap G^c.\]
Then $W_M^{\rm KMS}(h) = G^h_e.\Exp_{eH}(C^c_{\rm res})$
is a connected component of the domain $W_M^+(h)$,
where $C^c_{\rm res} := \{ x_+ + x_- \: x_\pm \in \fq_{\pm 1}(h),
x_+ - x_- \in C^\circ_{\rm res}\}$.
\item[\rm(c)] Suppose that $(\g,\tau,C)$ is non-compactly causal, 
$G = \Inn(\g)$, $C = C^{\rm max}$ is a maximal $\Inn_\g(\fh)$-invariant cone in $\fq$, 
$h \in C^\circ$ a causal Euler element 
  and $H = K^h \exp(\fh_\fp)$. Then the following assertions hold:
  \begin{itemize}
  \item[\rm(1)] The positivity domain $W_M^+(h)$ is connected. 
  \item[\rm(2)] If
    $\gamma(t) := \Exp_{eH}(t h)$ is the causal geodesic defined by $h$, then
$W_M^+(h)$ coincides with the order convex hull $W(\gamma)$ of $\gamma(\R)$.
  \item[\rm(3)] The wedge space
$\cW := \{ gW \: g \in G \}$ 
    of $G$-translates in $M$ is isomorphic to the symmetric space
    $G/G^h \cong \cO_h$.
  \end{itemize}
\end{itemize}
\end{theorem}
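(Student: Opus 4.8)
The plan is to reduce all three parts to statements about the cone $C\subeq\fq$ and the adjoint action of $h$, and then to feed in the tube/crown identifications of \cite{NO22b} together with the Cone Extension Theorem~\ref{thm:ext}. The starting observation is that, at a point $gH$, the trivialization $T_{x_M}(\sigma_g)\colon\fq\to T_{gH}(M)$ carries $C$ onto $C_{gH}$ and the modular vector field onto $\pr_\fq(\Ad(g)^{-1}h)$, where $\pr_\fq\colon\g\to\fq$ is the projection along $\fh$; hence
\[ W_M^+(h)=\{\ gH\in M\ \:\ \pr_\fq(\Ad(g)^{-1}h)\in C^\circ\ \}. \]
Combined with $C^\circ=\Ad(H)(C^\circ\cap\fq_\fp)$ and a decomposition of $G$ adapted to the pair $(K,h)$, this reduces the description of $W_M^+(h)$ to deciding, for $X$ ranging over a slice transverse to the $H$-orbits, when $\pr_\fq(e^{-\ad X}h)\in C^\circ$, a purely Lie-algebraic question governed by the $3$-grading $\g=\g_1(h)\oplus\g_0(h)\oplus\g_{-1}(h)$.

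For (a) and (b) I would identify $W_M^+(h)$ with $W_M^{\rm KMS}(h)$ by unfolding the definition of the latter: $m\in W_M^{\rm KMS}(h)$ means that the holomorphic extension of the modular flow maps the closed strip $\oline{\cS_\pi}$ into the tube domain $\cT_M$. One inclusion is soft --- taking boundary values as $\Im z\to0^+$ and applying a convexity/KMS argument, in the spirit of the standard-subspace analogy made precise below (see \cite{NO19} and Proposition~\ref{prop:standchar}), gives $X_M^h(m)\in C_m^\circ$, i.e.\ $m\in W_M^+(h)$. For the reverse inclusion, and for the explicit connected component containing $eH$ in its boundary, I would invoke Theorem~\ref{thm:ext} to pass to the $c$-dual picture and, in the non-compactly causal case, the identification $\cT_M\cong\Xi$ of \cite[Thm.~5.4]{NO22b}; this converts the strip condition into the spectral condition $\sigma(\ad(x-\tfrac\pi2 h))\subeq(-\tfrac\pi2,\tfrac\pi2)$ defining $C^\circ_{\rm res}$. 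Decomposing along the $3$-grading of $h$ then organizes the admissible directions into the building blocks $C_\pm=\pm C\cap\fq_{\pm1}(h)$ in case (a) and $C^c_{\rm res}=\{x_++x_-\:x_\pm\in\fq_{\pm1}(h),\ x_+-x_-\in C^\circ_{\rm res}\}$ in case (b), yielding the components $G^h_e.\Exp(C_+^\circ+C_-^\circ)$ and $G^h_e.\Exp_{eH}(C^c_{\rm res})$.

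For (c), with $C=C^{\rm max}$ and $H=K^h\exp(\fh_\fp)$, I would argue in three steps. First, the causal geodesic $\gamma(t)=\Exp_{eH}(th)$ lies in $W_M^+(h)$, since along it the modular vector field is an $\Ad(\exp th)$-image of $h\in C^\circ$. Second, using the monotonicity of the modular flow along causal curves (as in \cite{NO22b,MNO22b}) one shows that $W_M^+(h)$ is order convex in the globally hyperbolic order of $M$ (\cite[Thm.~5.3.5]{HO97}), so $W(\gamma)\subeq W_M^+(h)$; the reverse inclusion $W_M^+(h)\subeq W(\gamma)$ --- that every positivity point lies between two translates $\gamma(t_1)\le\,\cdot\,\le\gamma(t_2)$ --- is extracted from the explicit structure of $C^{\rm max}$ relative to the grading, and together these give both connectedness~(1) and~(2). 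Granting this, (3) is formal: for $a\in G^h$ the reduction formula above gives $\pr_\fq(\Ad(ag)^{-1}h)=\pr_\fq(\Ad(g)^{-1}h)$, so $a.W_M^+(h)=W_M^+(h)$ and $G^h\subeq G_W$; conversely any $g\in G_W$ must preserve the distinguished causal geodesic determined by $W=W_M^+(h)$, forcing $\Ad(g)h=h$ as in \cite[Thm.~7.1]{MNO22b}, whence $G_W=G^h$ and $\cW=G/G_W=G/G^h\cong\Ad(G)h=\cO_h$.

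The main obstacle is the global control of connected components: the positivity domain is cut out by a pointwise open condition whose topology is not visible from the local description, and pinning down the component through $eH$ in (a)/(b), or proving connectedness in (c)(1), requires analytic continuation of orbit maps into the crown $\Xi$ and a detailed understanding of how $C$ --- and especially $C^{\rm max}$ --- interacts with the $3$-grading of $h$; the inclusion $W_M^+(h)\subeq W(\gamma)$ in (c) is the technical heart. By contrast, the Lie-algebraic reduction of the first step and the stabilizer identity $G_W=G^h$ in (c)(3) are comparatively routine once the geometric picture is in place.
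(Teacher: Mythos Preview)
The paper's own proof is entirely by citation---(a) to \cite[Thm.~6.5]{NO22a} via the Extension Theorem~\ref{thm:ext}, (b) to \cite[Thms.~6.5, 7.1]{NO22b}, and (c) to \cite[Thm.~7.1, Cors.~7.2, 7.3]{MNO22b}---since this is a survey. Your sketch correctly identifies the architecture of those arguments: the reduction formula $W_M^+(h)=\{gH:\pr_\fq(\Ad(g)^{-1}h)\in C^\circ\}$, the Cone Extension Theorem for (a), the tube--crown identification $\cT_M\cong\Xi$ of \cite[Thm.~5.4]{NO22b} for (b), and the order-theoretic characterization via $W(\gamma)$ for (c), with the inclusion $W_M^+(h)\subeq W(\gamma)$ as the technical heart; this is in line with what the cited papers do.

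One small point on (c)(3): your converse argument ``$g\in G_W$ must preserve the distinguished causal geodesic'' is not quite how it goes. The cleaner route is covariance: $g.W_M^+(h)=W_M^+(\Ad(g)h)$, so $g\in G_W$ gives $W_M^+(\Ad(g)h)=W_M^+(h)$, and one then has to show that the wedge determines the Euler element (equivalently, that the modular flow is intrinsic to $W$), which is what \cite[\S 7]{MNO22b} establishes. That $\gamma$ is singled out by $W$ is a \emph{consequence} of $G_W=G^h$ together with (c)(2), not the mechanism for proving it.
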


\begin{proof} (a) This is a simplification of 
\cite[Thm.~6.5]{NO22a} which is possible
because we assume that $\g$ is simple, so that 
the Extension Theorem~\ref{thm:ext} applies.

\nin (b) \cite[Thms.~6.5,7.1]{NO22b}
  
\nin (c) The connectedness follows from \cite[Thm.~7.1]{MNO22b}, 
the second assertion is \cite[Cor.~7.2]{MNO22b}, and 
the third assertion is \cite[Cor.~7.3]{MNO22b}.
\end{proof}

If $\g$ is a simple hermitian Lie algebra
and $G$ is a corresponding connected Lie group,
then $\g$ contains a pointed generating invariant cone $C_\g$,
and $M := G$ becomes a symmetric space on which the group
$G \times G$ acts by left and right translations.
If $h \in \g$ is  an Euler element, then
$\alpha_t(g) = \exp(th) g\exp(-th)$ is the corresponding modular flow,
and we know from \cite[Thm.~5.2]{NO22a}
that the corresponding wedge domain in $G$ is 
\begin{equation}
  \label{eq:wedgecon}
 W_G^+(h) = W_G^{\rm KMS}(h) = G^h \exp(C_+^\circ + C_-^\circ),
 \ \mbox{ where }  \  
  C_\pm := \pm C_\g \cap \g_{\pm 1}(h).
\end{equation}
In particular, this domain is an open subsemigroup of~$G$. 
For a general irreducible cc space $M = G/H$ and
$C = C_\g \cap \fq$, the Lie algebra $\g$ is hermitian and the map 
\[ Q \: M \to G^{-\tau}_e, \quad \quad gH \mapsto g \tau(g)^{-1}, \] 
defines a covering of $M$ onto a causal subspace 
of the group type space specified by $(\g,C_\g)$. One thus obtains
for $C = C_\g \cap \fq$ the corresponding results in~$M$
(cf.\ Theorem~\ref{thm:6.4}(a)).

For ncc spaces there are stronger results that do not require
an Euler element in $\fh$, which is equivalent to the modular flow on $M$
to have a fixed point,
an assumption that is crucial in Theorem~\ref{thm:6.4}(a,b).
Here is another one that shows all Euler elements with non-trivial
wedge spaces lie in the same orbit.

\begin{theorem} \label{thm:3.2}
  {\rm (\cite[Cor. 6.3, 6.4]{MNO22b})} 
  Let $(G, \tau, H, C)$ be an irreducible
  ncc space, $M = G/H$ 
  and $h \in C^\circ \cap \fq_\fp$ be a causal Euler element with
  $\tau = \theta \tau_h$.  
If $h_1 \in \g$ is an Euler element for which
  the positivity domain $W_M^+(h_1)$ 
  is non-empty, then $h_1 \in \cO_h = \Ad(G)h$.
  In particular, $W_M^+(-h) = \emptyset$
  if $h$ is not symmetric, i.e., $-h \not\in \cO_h$.
\end{theorem}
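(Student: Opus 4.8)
The plan is to reduce the statement to the structure of positivity domains $W_M^+(h_1)$ and to exploit the rigidity of Euler elements inside a fixed adjoint orbit. The key point is that $W_M^+(h_1) \neq \emptyset$ forces a compatibility between the $3$-grading defined by $h_1$ and the causal cone field $(C_m)_{m\in M}$, and for ncc spaces this compatibility is so restrictive that it pins down the $\Inn(\g)$-orbit of $h_1$. First I would recall that, since $(\g,\tau,C)$ is irreducible ncc with causal Euler element $h$, the classification in \cite[Thm.~4.21]{MNO22a} tells us that \emph{all} causal Euler elements for $(\g,\tau,C)$ lie in the single orbit $\cO_h = \Ad(G)h$; so the task is to show that if $W_M^+(h_1)$ is non-empty then $h_1$ is causal for some conjugate of $\tau$, or more directly, that $h_1$ is $\Ad(G)$-conjugate to $h$.

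The main step is to analyze a point $m_0 = g_0 H \in W_M^+(h_1)$. By definition $X^M_{h_1}(m_0) \in C^\circ_{m_0}$, i.e., the infinitesimal generator of the modular flow of $h_1$ at $m_0$ is an interior point of the causal cone at that point. Translating by $g_0^{-1}$, we may assume $m_0 = eH$ and replace $h_1$ by $h_1' := \Ad(g_0^{-1})h_1$, still an Euler element, with $X^M_{h_1'}(eH) \in C^\circ$. Now $X^M_{h_1'}(eH)$ is the image of the $\fq$-component of $h_1'$ under the identification $T_{eH}(M)\cong \fq$; writing $h_1' = h_1'^{\fh} + h_1'^{\fq}$ with respect to $\g = \fh\oplus\fq$, the condition becomes $h_1'^{\fq} \in C^\circ$. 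Since $C^\circ$ consists of hyperbolic elements (the ncc hypothesis: $C$ is a hyperbolic, not elliptic, cone), $h_1'^{\fq}$ is a hyperbolic element lying in $C^\circ$, hence — again by the ncc classification and the fact that $C^\circ = \Ad(H)(C^\circ\cap\fq_\fp)$ — it is $\Ad(H)$-conjugate to a positive multiple of a causal Euler element in $\fq_\fp$. The hard part will be upgrading this statement about the $\fq$-component $h_1'^{\fq}$ to a statement about the full Euler element $h_1'$: one needs to show that the $\fh$-component is forced to vanish (or to be absorbed by a further conjugation), using that $h_1'$ has $\ad$-spectrum exactly $\{-1,0,1\}$ while $h_1'^{\fq}$ already behaves like a causal Euler element. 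I expect this is where \cite[Cor.~6.3,6.4]{MNO22b} does the real work: the interplay between the $\tau$-eigenspace decomposition and the $3$-grading, combined with the maximality/invariance properties of $C$, leaves no room for a nonzero $\fh$-part once the $\fq$-part is a causal Euler element. This yields $h_1' \in \cO_h$, hence $h_1 \in \cO_h$.

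For the final assertion, suppose $h$ is not symmetric, i.e.\ $-h\notin\cO_h$. If $W_M^+(-h)$ were non-empty, the first part would give $-h\in\cO_h$, a contradiction; hence $W_M^+(-h) = \emptyset$. I would also remark that the first part can be read off more structurally from Theorem~\ref{thm:6.4}(c): when $C = C^{\max}$ and $G = \Inn(\g)$, the wedge space $\cW = \{gW : g\in G\}$ is isomorphic to $G/G^h \cong \cO_h$, so every wedge region is $G$-conjugate to $W = W_M^+(h)$, and a non-empty positivity domain $W_M^+(h_1)$ must be one of these wedges, forcing $h_1\in\cO_h$; the general irreducible ncc case then follows by comparing cone fields, since any invariant cone field is sandwiched between the minimal and maximal ones and the positivity domain only grows when the cone shrinks, so non-emptiness for $C$ implies non-emptiness for $C^{\max}$. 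The one technical obstacle in this alternative route is handling the non-maximal cones and the stabilizer $H = K^h\exp(\fh_\fp)$ versus a general open subgroup of $G^\tau$, but since the conclusion $h_1\in\cO_h$ is insensitive to the component group of $H$, this causes no trouble.
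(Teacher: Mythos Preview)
The paper itself does not give a proof of Theorem~\ref{thm:3.2}: it is stated with a citation to \cite[Cor.~6.3, 6.4]{MNO22b} and no argument is reproduced in this survey. So there is no ``paper's own proof'' to compare against here; I can only assess your proposal on its own terms.

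Your first approach has a genuine gap at exactly the point you flag. After conjugating so that the base point lies in $W_M^+(h_1')$, you correctly obtain that the $\fq$-component $(h_1')^{\fq}$ lies in $C^\circ$. But the step ``$(h_1')^{\fq}$ is $\Ad(H)$-conjugate to a positive multiple of a causal Euler element'' is not justified: the open cone $C^\circ\cap\fq_\fp$ is typically much larger than the ray $\R_{>0}h$, so an arbitrary element of $C^\circ$ need not be a scalar multiple of an Euler element. And even granting that, you still need to control the $\fh$-component, which you explicitly leave open. The constraint that $\ad h_1'$ has spectrum $\{-1,0,1\}$ does interact nontrivially with the decomposition $h_1' = (h_1')^{\fh}+(h_1')^{\fq}$, but extracting the conclusion requires real work (in \cite{MNO22b} this is done via the detailed structure theory of the modular flow and its fixed point set, not by a soft argument of the kind you sketch).

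Your alternative route via Theorem~\ref{thm:6.4}(c)(3) is circular. The statement $\cW\cong\cO_h$ says only that the $G$-orbit of the specific wedge $W=W_M^+(h)$ is parametrized by $\cO_h$, because $g.W_M^+(h)=W_M^+(\Ad(g)h)$ and $G_W=G^h$. It says nothing about $W_M^+(h_1)$ for an Euler element $h_1$ in a \emph{different} adjoint orbit: such a set is not a priori a $G$-translate of $W$, so it could in principle be a non-empty open set disjoint from every member of~$\cW$. Ruling this out is precisely the content of Theorem~\ref{thm:3.2}, so you cannot invoke the wedge-space parametrization to prove it. Your reduction from a general invariant cone $C$ to $C^{\max}$ via monotonicity of the positivity domain is correct and useful, but it does not close this gap.
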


\section{Unitary and anti-unitary representations}
\label {sec:4}

\subsection{Smooth vectors and distribution vectors}
\label {sec:4.1}

Let $(U,\cH)$ be a unitary representation of the Lie group~$G$. 
A {\it smooth vector} is an element $\eta\in\cH$ for which the orbit map 
$U^\eta : G\to \cH, g \mapsto U(g)\eta$ 
is smooth. We write~$\cH^{\infty}$ for the space 
of smooth vectors. It carries the {\it derived representation} 
$\dd U $ of the Lie algebra $\fg$ given by
\begin{equation}
  \label{eq:derrep}
\dd U(x)\eta =\lim_{t\to 0}\frac{U(\exp t x)\eta -\eta}{t}.
\end{equation}
For write $\partial U(x) = \oline{\dd U(x)}$ for the infinitesimal
generator of the unitary one-parameter group $(U(\exp tx))_{t \in \R}$,
so that $U(\exp tx) = e^{t \partial U(x)}$ for $t \in \R$. 

We extend the $\g$-representation on $\cH^\infty$ to a homomorphism 
$\dd U \:  \cU(\g) \to \End(\cH^\infty),$ 
where $\cU(\g)$ is the complex enveloping algebra of $\g$. This algebra 
carries an involution $D \mapsto D^*$ determined uniquely by 
$x^* = -x$ for $x \in \g$.
For $D \in \cU(\g)$, we obtain a seminorm on $\cH^\infty$ by 
\[p_D(\eta )=\|\dd U(D)\eta\|\quad \mbox{ for } \quad \eta \in \cH^\infty.\] 
These seminorms define a Fr\'echet topology on $\cH^\infty$
for which the inclusion $\cH^\infty\hookrightarrow \cH$ is continuous
(see \cite[Sec.~ A.1]{NO21} for more details). 

The space $\cH^\infty$ of smooth vectors is $G$-invariant 
and we denote the corresponding representation by~$U^\infty$.
We write $\cH^{-\infty}$ for the space 
of continuous anti-linear functionals on $\cH^\infty$. 
Its elements are called \textit{distribution vectors}. 
The group $G$ and the convolution algebra
$C^\infty_c(G)$ act on $\eta \in \cH^{-\infty}$ by
\begin{itemize}
\item $(U^{-\infty}(g)\eta ) (\xi ):= \eta  (U(g^{-1})\xi)$, $g \in G, 
  \xi \in \cH^\infty$.
\item $U^{-\infty}(\varphi) \eta =\int_G \varphi(g) U^{-\infty}(g)\eta\, dg$
  for $\varphi \in C_c^\infty (G).$ 
\end{itemize} 
We have natural $G$-equivariant linear embeddings 
\begin{equation}
  \label{eq:embindistr}
\cH^\infty \into \cH
\mapright{\xi \mapsto  \la \cdot, \xi \ra} \cH^{-\infty}. 
\end{equation}

\subsection{Anti-unitary representations}
\label{sec:4.2}

Let $\cH$ be a complex Hilbert space.
A surjective isometry $A \: \cH \to \cH$ is said to be anti-unitary
if $A(\lambda v) = \oline\lambda Av$ for $v \in \cH,\lambda \in\C$.
We write $\AU (\cH)$ for the group of unitary and
anti-unitary operators on $\cH$.
If $J$ is  an  anti-unitary  involution,
then $\AU (\cH) = \rU (\cH)\dot\cup J\rU(\cH)$ is a disjoint union.

An {\it anti-unitary representation} of a Lie group $G$
is a group homomorphism $U : G\to \AU (\cH)$ for which all
orbit maps $U^v(g) := U(g)v$ are continuous. We refer
to \cite{NO17} for a detailed discussion of anti-unitary extensions
of unitary representations. 
If  $\tau : G\to G$ is an involutive automorphism of the Lie group $G$, then
$G_\tau = G\rtimes \{\id_G,\tau\}$ is a Lie group
and we are interested in extensions of unitary representations of $G$
to $G_\tau$, so that $J := U(\tau)$ is anti-unitary, hence a conjugation 
(an anti-unitary involution).
Any such $J$ commutes with $U(G^\tau)$ and
$\cH^J =\{v\in\cH \: Jv = v\}$ is a closed real subspace invariant under $G^\tau$
such that $\cH = \cH^J \oplus i\cH^J$. In this sense we have
$U (G^\tau) \subset \rO (\cH^J)$ and obtain a real orthogonal representation
of $G^\tau$ on the real Hilbert space~$\cH^J$.

\begin{example} We recall from Example~\ref{ex:SL2} that the group 
  \[G=\SU_{1,1}(\C)=\left\{\begin{pmatrix} a & b\\ \oline b & \oline a\end{pmatrix}\: |a|^2-|b|^2\right\}\]
  acts transitively on $\D=\{z\in \C\: |z|<1\}$ 
by fractional linear transformations 
\[\begin{pmatrix} a & b\\ \oline b & \oline a\end{pmatrix}.z= \frac{az+b}{\oline b z+ \oline a}\]
and the stabilizer of
$0$ is $K\cong \T$.
For $s > 0$ we consider on $\bD$ the positive definite kernel
\[ K(z,w) = \frac{1}{(1 - z \oline w)^s} \]
and the corresponding reproducing kernel Hilbert space
$\cH_s \subeq \cO(\bD)$. Note that $\cH_1 = H^2(\bD)$ is the Hardy space
and $\cH_2 = \cO(\cD) \cap L^2(\cD)$ the Bergman space. 
Then the universal covering group $\tilde G \cong \tilde\SL_2(\R)$ 
of $G$ acts unitarily on $\cH_s$ by lifting the projective
representation of $G$, given by 
\[  (U_s(g) f) (z) = (a + \oline b z)^{-s} f(g^{-1}.z) \quad \mbox{ for } \quad 
 g =\begin{pmatrix} a & b\\ \oline b & \oline a\end{pmatrix}\]
to $\tilde G$. 
The involution $\tau (g) = \overline{g}$ on $G$ lifts to $\tilde G$.
It coincides with $\tau_h$ for the Euler element
$h = \frac{1}{2}\pmat{0 & 1 \\ 1 & 0} \in \so_{1,1}(\R) \subeq \su_{1,1}(\C)$.
The corresponding involution 
on $\D$ is the complex conjugation which induces an conjugation
on $\cH_s$ by 
\[ J(f)(z) = \overline{f(\overline{z})}.\]
Then $J U_s(g) J = U_s(\tau_h(g))$, so that we obtain an anti-unitary
extension of $U_s$ to $\tilde G_{\tau_h}$.
\end{example}

We have the following general extension theorem
for groups with Lie algebra $\fsl_2(\R)$
(\cite[Thm.~4.24]{MN21}): 

\begin{theorem}   Let $h \in \fsl_2(\R)$ be an Euler element
  and $G$ a connected Lie group with Lie algebra $\fsl_2(\R)$.
   Then every unitary  
representation $(U,\cH)$ of $G$ extends to an anti-unitary 
representation of $G_{\tau_h}$ 
on the same Hilbert space. This extension is unique up to isomorphism. 
\end{theorem}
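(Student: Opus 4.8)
The plan is to reduce the statement to a concrete list of cases and then handle them uniformly. First I would observe that there are, up to conjugacy, only two connected groups with Lie algebra $\fsl_2(\R)$ that are relevant at the level of anti-unitary extension theory once the center is controlled: $\PSL_2(\R) = \Inn(\fsl_2(\R))$ and its connected coverings $\widetilde{\SL_2(\R)}^{(n)}$, including the universal cover $\widetilde{\SL_2(\R)}$ itself. Since any Euler element in $\fsl_2(\R)$ is conjugate under $\Inn(\fsl_2(\R))$ to the standard $h$ of \eqref{eq:hef}, and since conjugating $h$ conjugates $\tau_h$ accordingly, I may assume $h$ is the diagonal element of \eqref{eq:hef}, so $\tau_h(x) = e^{\pi i \ad h} x$ is the involution $\begin{pmatrix} a & b \\ c & d\end{pmatrix} \mapsto \begin{pmatrix} a & -b \\ -c & d\end{pmatrix}$ on $\fsl_2(\R)$, which integrates to an involution $\tau_h$ on every connected $G$ with this Lie algebra (it is inner on the simply connected cover, given by conjugation by $\diag(i,-i) \in \SL_2(\C)$, hence descends).

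\textbf{Key steps.} The core of the argument is the case $G = \widetilde{\SL_2(\R)}$, from which the general case follows by passing to quotients by central subgroups that are $\tau_h$-invariant (they automatically are, since $\tau_h$ fixes the center pointwise, the center being generated by a lift of $-\1 = \exp(\pi(e-f))$ or by the corresponding element, on which $\tau_h$ acts trivially). For the universal cover I would proceed as follows. (i) Decompose $(U,\cH)$ into irreducibles; by uniqueness up to isomorphism it suffices to extend each irreducible summand compatibly, and to note that the extension of a direct sum is the direct sum of extensions. (ii) For an irreducible unitary representation $\pi$ of $\widetilde{\SL_2(\R)}$, consider the representation $\pi \circ \tau_h$; one checks using the classification of the unitary dual (principal, complementary, discrete series and the two limits, plus characters) that $\pi \circ \tau_h \cong \oline\pi$, the complex-conjugate representation, because $\tau_h$ acts on a Cartan subalgebra by a reflection that sends a highest-weight vector to a lowest-weight vector; concretely, on the Hilbert space level $\oline\pi$ is realized on the conjugate Hilbert space $\oline\cH$, and an intertwiner $\pi\circ\tau_h \to \oline\pi$ together with the canonical antilinear bijection $\cH \to \oline\cH$ produces an antiunitary $J$ with $J\pi(g)J^{-1} = \pi(\tau_h(g))$. (iii) Show $J$ can be chosen to be an involution: $J^2$ is a unitary intertwiner of the irreducible $\pi$, hence a scalar $\lambda \in \T$ by Schur; replacing $J$ by $\mu J$ changes $J^2$ to $|\mu|^2 \bar\mu^2 \lambda\cdot$—wait, more carefully $(\mu J)^2 = \mu \bar\mu J^2 = |\mu|^2 \lambda$, so choosing $|\mu|^2 = \lambda^{-1}$ (possible since $\lambda > 0$ must be shown) normalizes $J^2 = \one$; the positivity $\lambda > 0$ is forced because $J^2$ commutes with $\pi$ and $J(J^2) = (J^2)J$ gives $\lambda = \bar\lambda$. (iv) Uniqueness: any two antiunitary extensions differ by a unitary $\pi$-intertwiner which is a scalar, and the scalar can be absorbed, so the extended representation of $G_{\tau_h}$ is unique up to unitary isomorphism; for a general $U$ this is assembled over the isotypic decomposition, using that the commutant of $U(G_{\tau_h})$ inside the commutant of $U(G)$ is as large as it needs to be for the isomorphism statement.

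\textbf{Main obstacle.} The step I expect to be most delicate is (ii)--(iii) combined: establishing $\pi \circ \tau_h \cong \oline\pi$ \emph{with} enough control to force $J^2$ to be a \emph{positive} scalar, uniformly across all series of the unitary dual of $\widetilde{\SL_2(\R)}$ (in particular for the non-spherical principal series and the mock-discrete/limit-of-discrete-series cases, where intertwiner normalizations are fussy). One clean way around the case-by-case check is to invoke a general principle: $\tau_h$ is conjugation by an element of $\SL_2(\C) \supseteq \SL_2(\R)$ whose square is central, and antiunitary extendability of a unitary representation of $G$ to $G_{\tau}$ for such a $\tau$ is governed by a single Frobenius--Schur-type obstruction in $\{\pm 1\}$ that vanishes here because the relevant element of $\SL_2(\C)$ normalizes $\SL_2(\R)$ and its square lies in the center acting trivially; this is exactly the kind of statement packaged in \cite{NO17, MN21}, so in the write-up I would cite the general extension criterion there and verify only that its hypotheses hold for $(\fsl_2(\R), h)$, namely that $\tau_h$ is realized by an element of order (dividing) $4$ in $\SL_2(\C)$ squaring into the center—which is immediate from $\tau_h = \Ad\,\diag(i,-i)$ and $\diag(i,-i)^2 = -\one$.
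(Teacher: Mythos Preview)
The paper does not supply its own proof of this theorem; it simply records it as \cite[Thm.~4.24]{MN21}. So your write-up cannot be compared against an argument in the text, only evaluated on its own.

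The outline is sound, but there is a genuine gap at the Frobenius--Schur step. Your computation $J(J^2)=(J^2)J$ yields only $\lambda=\bar\lambda$, hence $\lambda\in\{\pm1\}$, and the rescaling $(\mu J)^2=|\mu|^2\lambda$ can never flip the sign. Thus you have not excluded $J^2=-\one$, and if that occurred for some irreducible $\pi$ the extension to $G_{\tau_h}$ would fail on $\cH_\pi$ itself. Ruling out the quaternionic case uniformly over the full unitary dual of $\tilde{\SL_2(\R)}$ is precisely the content of the theorem, and it is not automatic; one either checks it model by model (as in the example preceding the theorem, where $Jf(z)=\oline{f(\bar z)}$ visibly squares to the identity), or one invokes a structural argument such as the one actually given in \cite[Thm.~4.24]{MN21}. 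Your proposed fallback, ``cite the general extension criterion in \cite{NO17,MN21},'' is circular, since that is exactly the reference the paper is quoting for the result. Two smaller points: your claim that $\tau_h$ fixes the center of $\tilde{\SL_2(\R)}$ pointwise is false---since $\tau_h(e-f)=-(e-f)$, the generator $z=\widetilde\exp(\pi(e-f))$ satisfies $\tau_h(z)=z^{-1}$, so $\tau_h$ \emph{inverts} the center (the subgroups $n\Z$ are still preserved, so your descent argument survives); and for a general unitary representation you need a direct \emph{integral} decomposition, not a direct sum, together with a measurable choice of conjugations on the fibers and an argument that the isotypic multiplicity spaces carry a compatible real structure, none of which is addressed.
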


\section{Standard subspaces}
\label{sec:5}


There are two natural ways to  construct standard subspaces. One is by
using a cyclic separating vector $\Omega$ of a von Neumann algebra $\cM$,
so that $\sV_{(\cM,\Omega)} = \oline{\cM_h\Omega}$ is standard.
The other is to use
anti-unitary representations and the  {\it Brunetti--Guido--Longo construction} 
mentioned in the introduction (cf.~\cite{BGL02}, \cite{NO17}). It is based on
the following observation. 

 We have already seen that every standard subspace 
$\sV$ determines a pair $(\Delta_\sV, J_\sV)$ of modular objects 
and that $\sV$ can be recovered from this pair by $\sV = \Fix(J_\sV \Delta_\sV^{1/2})$. We thus obtain a representation theoretic 
parametrization of the set of standard subspaces of $\cH$ 
(cf.~\cite{BGL02}, \cite{NO17}):
Each standard subspace $\sV$ specifies an anti-unitary representation $U^{\sV}$ of $\R^\times$ by  
\begin{equation}
  \label{eq:uv-rep}
U^\sV(e^t) := \Delta_\sV^{-it/2\pi} \quad \mbox{ and }  \quad 
U ^\sV(-1) := J_\sV.  
\end{equation}
This defines a bijection between the set of standard subspaces  and
the set of anti-unitary representations of $\R^\times$ on $\cH$.

The following proposition (\cite[Prop.~2.1]{NOO21}) 
characterizes the elements 
of a standard subspace $\sV$ specified by the pair $(\Delta, J)
= (\Delta_\sV, J_\sV)$ in 
terms of analytic continuation of orbit maps of the unitary 
one-parameter group $(\Delta^{it})_{t \in \R}$ and the real space
$\cH^J$.

\begin{proposition}  \label{prop:standchar} Let $\sV \subeq \cH$ be a standard subspace 
  with modular objects~$(\Delta, J)$,
\[ \cS_\pi := \Big\{ z \in \C \: 0 < \Im z < \pi\Big\} \quad \mbox{  and } \quad 
\cS_{\pm \pi/2} := \{ z \in \C \: |\Im z| < \frac{\pi}{2}\}.\]  For 
$\xi \in \cH$, we consider the orbit map $\alpha^\xi \: \R \to \cH, t \mapsto 
\Delta^{-it/2\pi}\xi$. Then the following are equivalent:
\begin{itemize}
\item[\rm(i)] $\xi \in \sV$. 
\item[\rm(ii)] $\xi \in \cD(\Delta^{1/2})$ with $\Delta^{1/2}\xi = J\xi$. 
\item[\rm(iii)] (KMS-like condition) The orbit map $\alpha^\xi \: \R \to \cH$ 
extends to a continuous map  $\oline{\cS_\pi} \to \cH$ which is 
holomorphic on $\cS_\pi$ and satisfies $\alpha^\xi(\pi i) = J\xi$. 
\item[\rm(iv)] There exists an element $\eta \in \cH^J$ 
whose orbit map $\alpha^\eta$ 
extends to a continuous map $\oline{\cS_{\pm\pi/2}} \to \cH$ which is 
holomorphic on the interior and satisfies $\alpha^\eta(-\pi i/2) = \xi$,
i.e., $\eta \in \cH^J \cap \cD(\Delta^{1/4})$. 
\end{itemize}
\end{proposition}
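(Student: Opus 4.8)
The plan is to prove the chain of equivalences (i) $\Leftrightarrow$ (ii) $\Leftrightarrow$ (iii) $\Leftrightarrow$ (iv) by exploiting the polar decomposition $S_\sV = J\Delta^{1/2}$ together with the spectral theorem for the positive selfadjoint operator $\Delta$. The equivalence (i) $\Leftrightarrow$ (ii) is almost immediate from the definitions: $\sV = \Fix(S_\sV)$ and $S_\sV = J\Delta^{1/2}$, so $\xi \in \sV$ means precisely $\xi \in \cD(S_\sV) = \cD(\Delta^{1/2})$ and $J\Delta^{1/2}\xi = \xi$; applying the anti-unitary involution $J$ and using $J^2 = \id$ rewrites the latter as $\Delta^{1/2}\xi = J\xi$. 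One should remark here that $J\cH^{1/2}$-domain considerations are harmless since $J$ preserves $\cD(\Delta^{1/2})$ by the relation $J\Delta J = \Delta^{-1}$.

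Next I would treat (ii) $\Leftrightarrow$ (iii). Given (ii), write $\Delta = \int_0^\infty \lambda\, dP(\lambda)$ spectrally and define, for $z \in \oline{\cS_\pi}$, $\alpha^\xi(z) := \Delta^{-iz/2\pi}\xi = \int_0^\infty \lambda^{-iz/2\pi}\, dP(\lambda)\xi$. The key point is that $\xi \in \cD(\Delta^{1/2})$ forces the measure $d\|P(\lambda)\xi\|^2$ to integrate $\lambda$, which is exactly what guarantees that $|\lambda^{-iz/2\pi}|^2 = \lambda^{\Im z/\pi}$ stays integrable for $0 \le \Im z \le \pi$; hence the integral converges on the closed strip, defines a holomorphic $\cH$-valued function on the open strip (differentiation under the integral, justified by dominated convergence), and is norm-continuous up to the boundary. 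Evaluating at $z = \pi i$ gives $\alpha^\xi(\pi i) = \Delta^{1/2}\xi = J\xi$ by (ii). Conversely, if $\alpha^\xi$ has such a holomorphic extension, a Phragmén–Lindelöf / edge-of-the-strip argument applied to the scalar functions $\lambda \mapsto \la \alpha^\xi(z), \zeta\ra$ (or directly: the boundary value at $\Im z = \pi$ being in $\cH$ forces $\int \lambda\, d\|P(\lambda)\xi\|^2 < \infty$) shows $\xi \in \cD(\Delta^{1/2})$ and identifies the boundary value, whence (ii). The standard reference for this strip-holomorphy bookkeeping is the analysis underlying the Tomita–Takesaki theorem; I would cite it rather than reprove it in full.

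For (iii) $\Leftrightarrow$ (iv) I would translate between the two strips via the substitution that halves the width and recenters. If $\xi \in \sV$, set $\eta := \Delta^{1/4}\xi$; one checks $\eta \in \cH^J$ using $J\Delta^{1/4}J = \Delta^{-1/4}$ and $\Delta^{1/2}\xi = J\xi$, namely $J\eta = J\Delta^{1/4}\xi = \Delta^{-1/4}J\xi = \Delta^{-1/4}\Delta^{1/2}\xi = \Delta^{1/4}\xi = \eta$. Then $\alpha^\eta(z) = \Delta^{-iz/2\pi}\eta$ extends holomorphically to $\cS_{\pm\pi/2}$ by the same spectral estimate as above (now the relevant bound is $\lambda^{\Im z/\pi}$ for $|\Im z| < \pi/2$, controlled because $\eta \in \cD(\Delta^{1/4})$), is continuous on the closure, and $\alpha^\eta(-\pi i/2) = \Delta^{1/4}\cdot\Delta^{1/4}\xi = \Delta^{1/2}\xi$... wait—more carefully, $\Delta^{-i(-\pi i/2)/2\pi}\eta = \Delta^{-1/4}\eta = \Delta^{-1/4}\Delta^{1/4}\xi = \xi$, giving $\alpha^\eta(-\pi i/2) = \xi$ as required. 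Conversely, given such an $\eta \in \cH^J$, the boundary regularity forces $\eta \in \cD(\Delta^{\pm 1/4})$, and $\xi = \Delta^{-1/4}\eta$; then $\Delta^{1/2}\xi = \Delta^{1/4}\eta = J\Delta^{-1/4}\eta$ (using $J\eta = \eta$ and $J\Delta^{1/4}J = \Delta^{-1/4}$, i.e. $\Delta^{1/4}\eta = \Delta^{1/4}J\eta = J\Delta^{-1/4}\eta$), and since $\Delta^{-1/4}\eta = \xi$ this reads $\Delta^{1/2}\xi = J\xi$, which is (ii).

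The main obstacle, and the step deserving the most care, is the passage from "the orbit map extends holomorphically with continuous boundary values in $\cH$" back to "$\xi$ lies in the operator domain" in the converse directions of (iii) $\Rightarrow$ (ii) and (iv) $\Rightarrow$ (iii)/(ii). The clean way is to test against the spectral projections $P([1/n,n])$: on each such bounded spectral piece everything is entire, the functions agree with the given extension on $\R$ by uniqueness of analytic continuation, and then uniform boundedness of the boundary values at $\Im z = \pi$ (resp. $\pm\pi/2$) yields, via monotone convergence in $n$, the finiteness of $\int \lambda\, d\|P(\lambda)\xi\|^2$ (resp. $\int \lambda^{1/2}\,d\|P(\lambda)\eta\|^2$), i.e. membership in $\cD(\Delta^{1/2})$ (resp. $\cD(\Delta^{1/4})$). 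I would present this cutoff argument explicitly since it is the one genuinely non-formal ingredient; the rest is bookkeeping with the spectral calculus and the commutation relation $J\Delta^{it}J = \Delta^{it}$, $J\Delta^{1/2}J = \Delta^{-1/2}$. Since the statement is quoted as \cite[Prop.~2.1]{NOO21}, an acceptable alternative is simply to sketch the forward implications and cite the converse analytic-continuation facts; but the proof above is self-contained modulo the spectral theorem.
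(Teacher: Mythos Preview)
Your argument is correct. The paper itself does not prove this proposition; it simply quotes it as \cite[Prop.~2.1]{NOO21}, so there is no in-paper proof to compare against. What you have written is precisely the standard spectral-theoretic proof one finds in that reference (and, in essentially the same form, in \cite{Lo08}): (i)\,$\Leftrightarrow$\,(ii) from the definition $\sV=\Fix(J\Delta^{1/2})$, (ii)\,$\Leftrightarrow$\,(iii) via the spectral integral $\int \lambda^{-iz/2\pi}\,dP(\lambda)\xi$ with the cutoff/monotone-convergence argument for the converse, and (iii)\,$\Leftrightarrow$\,(iv) via the substitution $\eta=\Delta^{1/4}\xi$ together with $J\Delta^{1/4}J=\Delta^{-1/4}$.

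Two small points worth tightening in a final write-up. First, when you say the bound $\lambda^{\Im z/\pi}$ for $|\Im z|<\pi/2$ is ``controlled because $\eta\in\cD(\Delta^{1/4})$'', this literally covers only $\Im z\ge 0$; the other half needs $\eta\in\cD(\Delta^{-1/4})$, which you get for free from $J\eta=\eta$ and $J\Delta^{1/4}J=\Delta^{-1/4}$, but it is worth saying explicitly. Second, the ``wait---more carefully'' aside should be cleaned up; the correct computation is the one you give afterwards, $\alpha^\eta(-\pi i/2)=\Delta^{-1/4}\eta=\xi$. With those cosmetic fixes the proof is complete and self-contained modulo the spectral theorem.
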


Let $(U,\cH)$ be an anti-unitary representation of 
$G_{\tau_h} = G \rtimes \{\id_G, \tau_h\}$.
Then
\begin{equation}
  \label{eq:jdelts}
  J =  U(\tau_h)\quad \mbox{ and } \quad  \Delta :=  e^{2\pi i \cdot \partial U(h)}
\end{equation}
specify a standard subspace $\sV := \Fix(J\Delta^{1/2})$.
To construct elements in $\sV$, we want to apply (iv) above.
So we start with a $J$-fixed $K$-finite vector~$v$.
As $(\frac{\pi}{2},\frac{\pi}{2})h \subeq \Omega_\fp$,
the Kr\"otz--Stanton Theorem (cf.~\cite{KS04} and \eqref{eq:holext})
shows that the orbit map
$\alpha^v(t) = U(\exp th)v$ extends holomorphically to the open
strip $\cS_{\pm \pi/2}$, but unfortunately the limit
\[ \beta(v)
  = \lim_{t \to -\frac{\pi}{2}} \alpha^v(it) =
  \lim_{t \to -\frac{\pi}{2}} e^{it \cdot \partial U(h)}v \]
does not exist in $\cH$. However, we expect that it always exists
in the space $\cH^{-\infty}$ of distribution vectors;
see \cite{FNO22} for first steps in this direction.

\begin{example}
  We consider the  strip
  $\cS := \cS_\pi:=  \{z\in \C\: 0 < \Im z < \pi\}$ 
  and its Hardy space
  \[ H^2(\cS) := \Big\{ f \in \cO(\cS) \:
\|f\|^2 :=    \sup_{z \in \cS} \int_\R |f(z + t)|^2\, dt < \infty \Big\}, \]
  where $\cO(\cS)$ is the space of holomorphic functions on $\cS$.
  This is a reproducing kernel Hilbert space with kernel
  \begin{equation}
    \label{eq:kernel-strip} K(z,w) =
    \frac{i}{4\pi\sinh\big(\frac{z-\oline w}{2}\big)}
    \quad \mbox{ satisfying } \quad
    K(z,z) = \frac{1}{4\pi\sin(\Im z)}.
  \end{equation}
The group $\R$ acts unitarily on $H^2(\cS)$ by $(U_tF)(z) = F(z + t)$
  (cf.\ \cite{ANS22}).
  Writing $U_t = e^{it P}$ with the selfadjoint operator $P= -i\frac{d}{dz}$
  on $L^2(\R)$ and $\Delta := e^{-2\pi P}$.
%
  We are interested in the standard subspace corresponding to $\Delta$ and
  the conjugation defined by
  $(J F)(z) := \oline{F(i\pi + \oline z)}$
  via $\sV := \Fix(  J  \Delta^{1/2})$. 
The evaluation functions $K_w(z) := K(z,w)$ satisfy
$JK_w = K_{\pi i + \oline w}$, so that, for $y \in \R$ and
$w = \frac{\pi i}{2} + y$ we have
$K_w  \in \cH^2(\cS)^{  J}$. Consider $\eta := K_{\frac{\pi i}{2}}$. 
For the unitary one-parameter group $(U_t)_{t \in \R}$, we have
$U_t K_w = K_{w-t}$, so that the orbit map 
$\alpha^\eta(t) := K_{\frac{\pi i}{2}-t}$ of $\eta$ 
has an analytic extension $\cS_{\pm \pi/2} \to H^2(\cS)$ given by
$\alpha^\eta(z) = K_{\frac{\pi i}{2} -\oline z}$. It follows in particular that
\[ \|\alpha^\eta(it)\|^2 
  = K\Big(\frac{\pi i}{2} + y +it, \frac{\pi i}{2} + y +it\Big)
  = \frac{1}{4\pi \sin\big(\frac{\pi}{2}+ t\big)} \to \infty\]
  for $|t| \to \frac{\pi}{2}$.
  So the condition in Proposition~\ref{prop:standchar}(iv) is not satisfied.
  However, one can show that $\alpha^\eta(-\frac{\pi i}{2})
  \in H^2(\cS)^{-\infty}$ defines a distribution vector of the
  unitary one-parameter group $(U_t)_{t \in \R}$. It corresponds to the
  tempered distribution on $\partial \cS$, defined on $\R$ by 
  \[ D(x)
  = \lim_{\eps \to 0+}  \frac{i}{4\pi\sinh\big(\frac{x+ i \eps}{2}\big)}
  = \frac{x}{4\pi\sinh\big(\frac{x}{2}\big)}
  \cdot   \lim_{\eps \to 0+} \frac{i}{x + i \eps}\]
  and on $\pi i + \R$ by
  \[ D(\pi i + x)
= \frac{i}{4\pi\sinh\big(\frac{x+ i \pi}{2}\big)}
= \frac{x}{4\pi\cosh\big(\frac{x}{2}\big)}.\]
Then the distribution kernel $D(x-y)$  on $\partial \cS$ is positive
definite, taking boundary values yields an isomorphism of $H^2(\cS)$ with
the corresponding Hilbert subspace $\cH_D \subeq \cD'(\partial \cS)$
and the standard subspace $ \sV$ is generated by the space
$C^\infty_c(\R,\R)$ of real valued test functions on
the lower boundary $\R \subeq \partial \cS$.   
\end{example}

\begin{example} \label{ex:posener}
  {\rm(The group case, \cite{NO21})}\label{ex:GrCase} Let $G$
  be a hermitian Lie group with Cartan involution $\theta$ and
  $(U,\cH)$ be a non-trivial 
  unitary representation of $G$ whose positive cone
  \[C_U = \{x\in \fg\: -i\partial U(x)\geq 0\}\]
  is generating. Then $U$ extends to a representation
  of the semigroup $S_U = G \exp(iC_U)$ that is holomorphic on the interior
  of~$S_U$. We assume that $h \in \g$ is an Euler element and that
  $U$ extends to $G_{\tau_h}$ by $U(\tau_h) = J$ (cf.~\cite{NO17}).
  Then the standard subspace
  $\sV := \sV_{(h,\tau_h)}$ associated to $J$ and $\Delta = e^{2\pi i \cdot
    \partial U(h)}$ as in \eqref{eq:bgl} permits a non-trivial endomorphism 
  semigroup
  \[ S_\sV := \{ g \in G \: U(g) \sV \subeq \sV \}
    = G^{h,\tau_h} \exp(C), \]
  where $C = C_+ + C_-$ for $C_\pm := \pm C_U \cap  \g_{\pm 1}(h)$
  (\cite{Ne19, Ne21}). On the other hand, the interior of this semigroup 
  $S_\sV^\circ$ is a union of connected components of
  the wedge domain of the modular flow
  $\alpha_t(g) = \exp(th) g \exp(-th)$.
  In  \cite{NO21} we use this fact to describe real subspaces
  $\sE \subeq \cH^{-\infty}$ for which $\sH_\sE(S_\sV^\circ)$,
  defined by  \eqref{eq:he1} equals~$\sV$.

  To construct elements of $\sV$ in the spirit of
  Proposition~\ref{prop:standchar}, one can also start with a
  $J$-fixed vector $v \in \cH^J$ and the subsemigroup
  \[ S_U^{\oline \tau_h} = G^{\tau_h} \exp(i(C_+ - C_-)) \]  of
  fixed points of the antiholomorphic extension $\oline\tau_h$
  of $\tau_h$ to $S_U$.
  Then
  \[ U((S_U^{\oline \tau_h})^\circ) \cH^J \subeq \cH^J \cap \cD(\Delta^{1/4}) \] 
  follows easily from
  $\alpha_{\frac{\pi i}{2}}((S_U^{\oline \tau_h})^\circ) = S_\sV^\circ$, so that
  $\Delta^{-1/4} U((S_U^{\oline \tau_h})^\circ) \cH^J \subeq \sV$. 
We refer to \cite{NO22a} for a detailed discussion
of the group  case (see Example~\ref{ex:comp-grp}) 
and for the construction of
nets of standard subspaces satisfying (I), (Cov), (RS) and (BW).
\end{example}

\begin{example} A low dimensional example of a solvable group,
    where many of these phenomena can be
  observed is the affine group
  \[ G = \Aff(\R)_e \cong \R \rtimes \R_+^\times \]
  with Euler element $h = (0,1) \in \g = \R \rtimes \R$ (generator of dilations),
    $\tau_h(b,a) = (-b,a)$ (corresponding to point reflection in $0$),
  and we write $x = (1,0)$ for the generator of translations that
satisfies $[h,x] = x$. 
  For a positive energy representation of $G_{\tau_h}$ we then have
  $C_U = [0,\infty) x = C_+$ and for $\sV = \sV_{(h,\tau_h})$ we have
  $S_\sV = \exp(\R h) \exp(C_+)$.  If $H := \{0\} \rtimes \R^\times_+$ is the
  dilation group, then $M := \R \cong G/H$ is a flat symmetric space,
  the modular flow is given by $\alpha_t(p) = e^t p$ and
  $W_M^+(h) = (0,\infty)$ is the corresponding wedge domain.

  A typical example arises for the Hardy space $H^2(\C_+)$ of the upper half
  plane $\C_+ = \{ z \in \C \: \Im z > 0\}$ with the reproducing kernel
  \[ K(z,w) =
\frac{1}{2\pi} \frac{i}{z-\oline w} \] 
    on which $G$ acts by
  \[ (U(b,a)f)(z) := a^{-1/2} f(a^{-1}(z + b)) \quad \mbox{ and }  \quad
    (Jf)(z) := \oline{f(-\oline z)}.\]
  We may identify $H^2(\C_+)$ by its boundary value map in $L^2(\R)$,
  considered as a Hilbert space $\cH_D$ of (tempered) distributions
  with distribution kernel of the form $D(x-y)$, where 
  \[ D(x) = \lim_{\eps \to 0+} \frac{1}{2\pi} \frac{i}{z+ i \eps}. \]
  From \cite[(3.42)]{NOO21} it follows that the standard subspace
  $\sV \subeq \cH_D$ is generated by the real subspace 
  $e^{-\frac{\pi i}{4}} C^\infty_c((0,\infty),\R) \subeq C^\infty_c(\R,\C)$.
  Note that the support of these test functions is restricted to the wedge
  region $W_M^+(h) = (0,\infty)$. 
\end{example}

\section{Cayley type spaces and causal compactifications}
\label{sec:6}

An irreducible causal  symmetric space  $(G,H,C)$ is
  said to be of {\it Cayley type} if there exists an Euler
  element $h \in \g$ such that $\tau =\tau_h$ and $H = G^h$.
  As $H$ always contains the center, we assume in the following that
  $Z(G) = \{e\}$, i.e., $G  \cong \Inn(\g)$. 
Then $h \in \fh = \g^\tau$, so that $h$ is {\bf not} a causal Euler element.

  Suppose first that $(G,\tau,H,C)$ is non-compactly causal
  and that $h \in \fh$ is an Euler element with $\tau_h = \tau$. Then
  $\fh = \g_0(h)$ and $\fq = \fg_1(h) \oplus \fg_{-1}(h)$. 
  Let $C_\pm = (\pm C) \cap \fg_{\pm 1}(h)$. 
  These are $H$-invariant pointed generating cones in 
  $\fg_{\pm 1}(h)$. 
  The cone $C= C_h := C_+ - C_-$ is hyperbolic and $C_e := C_+ + C_-$ is elliptic.
  In fact, one can characterize (irreducible) Cayley type spaces
  by the existence of   $\Inn_\g(\fh)$-invariant pointed generating
  elliptic and hyperbolic cones in $\fq$ (\cite{HO97}).

  The Lie algebras $\fp^\pm (h)  := \fg_0 (h)\oplus \fg_{\pm 1} (h)$
  are maximal parabolic subalgebras. The
  corresponding maximal parabolic subgroups
  are 
\[ P^\pm (h) = G^h \exp(\fg_{\pm 1}(h)).\] 
As $G \subeq \Aut(\g)$, we have $(G^{\tau_h})_e\subset G^h \subset G^{\tau_h}$. Hence
  $G/G^h$ is a symmetric space which has an open dense orbit in the
  compact manifold
  \[ X := G/P^+ (h)\times G/P^-(h).\]
The cones
  $C_\pm \subeq \g_{\pm 1}(h) \cong \g/\fp^\mp(h)$ are
  $\Ad(P^\mp(h))$-invariant, hence define $G$-invariant causal structures on 
the flag manifolds  $G/P^\mp(h)$. In particular, the product
  $X$ is a causal $G$-manifold,
  and the embedding
  \[ G/H \hookrightarrow G/P^+(h)\times G/P^-(h), \quad
    gH \mapsto (gP^+(h), g P^-(h)) \]
  is causal for $C_h$, resp., $C_e$ if $G/P^-(h)$ carries the causal structure
  defined by $C_+$ and  $G/P^+(h)$ carries the causal structure 
  defined by $-C_-$, resp., $C_-$ 
  (see \cite{NeuO00, OO99, NO22b} for details).

The simply connected coverings of the causal spaces
$G/P^\pm(h)$ coincide with the simple space-time manifolds in the sense of 
Mack--de Riese \cite{MdR07}. As $\g_{\mp 1}(h)$ carries the structure
of a euclidean Jordan algebra, the spaces
$G/P^\pm(h)$ can also be considered as conformal compactifications
of simple euclidean Jordan algebras.
Nets of standard subspaces on the causal manifolds
$G/P^\pm(h)$ satisfying (I), (Cov), (RS) and (BW) 
have been constructed from unitary highest weight representations
in \cite{NO21}. The wedge domain in $G/P^-(h)$ is given
by $W := W^+_{G/P^-(h)}(h) = \exp(C_+^\circ)P^-(h)$ which is diffeomorphic to the
open cone $C_+^\circ \subeq \g_1(h)$, considered as an open subset of~$G/P^-(h)$.

From \cite[\S 7]{MNO22b} we infer that 
\[ G_W = \{ g \in G \: g.W = W\} = G^h,\]
so that the wedge space $\cW := \{ g.W \: g \in G\}$ of
$G/P^-(h)$ can also be identified with the adjoint orbit
$\cO_h$ of~$h$ (cf.\ \cite{MN21}).

In physics the example $\g = \so_{2,d}(\R)$ is of particular
importance. In this case any Euler element $h$ 
leads to $\fh := \g^{\tau_h} \cong \R h \oplus \so_{1,d-1}(\R)$,
$\g_1(h) \cong \R^d$ is $d$-dimensional Minkowski space and
$G/P^-(h) \cong (\bS^1 \times \bS^{d-1})/\{\pm \1\}$ is its conformal
compactification. Here $\cW$ is the space of conformal wedge domains
in Minkowski space.
For $d= 4$, we have in particular 
\[  G/P^-(h) \cong (\bS^1 \times \bS^3)/\{\pm\1\}
\cong (\T \times \SU_2(\C))/\{ \pm\1\} \cong \U_2(\C) \]
on which the group $G = \SU_{2,2}(\C)$ with Lie algebra
$\su_{2,2}(\C) \cong \so_{2,4}(\R)$ acts by conformal maps.

\end{document}